\newcommand{\RR}{\mathbb{R}}
\newcommand{\CC}{\mathbb{C}}
\newcommand{\NN}{\mathbb{N}}
\newcommand{\ud}{\textup{d}}
\newcommand{\eMP}{\epsilon\textup{-MP}}
\newcommand{\MP}{\textup{MP}}
\DeclareMathOperator*{\argmin}{arg\,min}
\theoremstyle{definition}
\newtheorem{theorem}{Theorem}[section]
\newtheorem{defn}[theorem]{Definition}
\newtheorem{thm}{Theorem}[section]
\begin{document}

\title[Minimum-latency Time-frequency Analysis]{Minimum-latency Time-frequency Analysis Using Asymmetric Window Functions}

\author{Li~Su~ and~Hau-tieng~Wu}
\address{L. Su is with Research Center for Information Technology Innovation, Academia Sinica, Taipei, Taiwan, e-mail: (lisu@citi.sinica.edu.tw)}
\address{H.-T. Wu is with the Department of Mathematics, University of Toronto, Toronto ON M5S 2E4, Canada. e-mail: (hauwu@math.toronto.edu)}

\maketitle

\begin{abstract}
We study the real-time dynamics retrieval from a time series via the time-frequency (TF) analysis with the minimal latency guarantee.
While different from the well-known intrinsic latency definition in the filter design, a rigorous definition of intrinsic latency for different time-frequency representations (TFR) is provided, including the short time Fourier transform (STFT), synchrosqeezing transform (SST) and reassignment method (RM).
To achieve the minimal latency, a systematic method is proposed to construct an asymmetric window from a well-designed symmetric one based on the concept of minimum-phase, if the window satisfies some weak conditions.
We theoretically show that the TFR determined by SST with the constructed asymmetric window does have a smaller intrinsic latency.
Finally, the music onset detection problem is studied to show the strength of the proposed algorithm.
\end{abstract}

\section{Introduction}

Oscillation is ubiquitous. The dynamics of the oscillatory behavior usually serve as a portal to the system. When the system is non-linear and non-stationary, the traditional Fourier transform based on the assumption of stationary input is unsuitable, and time-frequency (TF) analysis becomes a natural candidate for this purpose \cite{Flandrin:1999}.
There are several TF analysis methods available, ranging from linear, quadratic to nonlinear ones \cite{Flandrin:1999}. For most of these methods, the instantaneous spectrum at a time instance $t$ is estimated based on a windowed segment of the signal around $t$, and the estimated spectra over time constitute the time-frequency representation (TFR). The process of signal segmentation along with the computation of the spectrum introduces {\em latency}, that is, delay of the output signal in comparison to the input signal. Take the short-time Fourier transform (STFT) as an example. For a {\em symmetric} window function $h(t)\in L^2$ supported on $[-T/2, T/2]$, the STFT of $x\in L^2(\mathbb{R})$ is represented as
\begin{equation}
V^{(h)}_x(t, \omega) = \int x(s) h(s-t)e^{-j2\pi\omega s} ds,
\label{eq: stft1}
\end{equation}
where $t\in\RR$ is the time and $\omega\geq 0$ is the frequency. We call $|V^{(h)}_x|^2$ the {\em spectrogram} of the signal and $|V^{(h)}_x(t, \cdot)|^2$ the {\em spectrum} of the signal at $t$. By the symmetry assumption, $h(t)$ is centered at $t$, and therefore we estimate the spectrum at $t$ based on the windowed segment in $[t-T/2, t+T/2]$. In other words, we obtain the spectrum estimated at $t$ only after we collect data up to time $t+T/2$. Therefore, there is {\em at least} a latency $T/2$ presented in the STFT when compared with the input signal, and the total latency should be always larger then $T/2$, no matter how the hardware or algorithm are implemented. We define this required time as the \emph{intrinsic latency}.

Nowadays, extracting dynamics using TF analysis with real-time speed and low latency has emerged as a fundamental requirement in various interactive and user-centered technologies, ranging from hearing aids \cite{mauler2010optimization}, computer-aided music practicing tools, brain-computer interface \cite{lin2010real}, to patient monitoring systems. For example, in the patient monitor system, it is allowed to display the minute heart rate with the maximal 10-s delay \cite{AMS:2007} for the clinical monitoring purpose, and similar criteria are needed for the information acquired from the TF analysis. A musical pitch tracking system needs a 10 ms latency while a commonly used window length is 93 ms, a challenge exemplified in \cite{pardue2014lowlatency}: the system specification is never satisfied as the latency is at least $93/2=46.5$ ms, no matter how fast the algorithm can be. In these cases, the intrinsic latency becomes the bottleneck for real-time processing. One intuitive way to reduce the intrinsic latency is shortening the window, but doing so sacrifices the spectral resolution at the same time due to the Heisenberg uncertainty principle \cite{Flandrin:1999}.

This issue could be alleviated by replacing the symmetric window $h(t)$ with an {\em asymmetric} window to reduce the intrinsic latency \cite{florencio1991use, alam2014robust, morales2012use, alam2012robust, kotnik2002robust, Rozman_Kodek:2007, heo2015classification, mauler2007low, nagathil2012optimal, lollmann2008low, mauler2010optimization, mauler2009improved}. In a nutshell, an asymmetric window $h^\circ(t)$ with the support $[-T/2, T/2]$ with the energy concentrated at $t^\circ>0$ could make the intrinsic latency less than $T/2$, since the ``window mass'' is toward the future. Recently, such an idea has also been discussed in the context of adaptive TF analysis \cite{Andersen2016}. However, to the best of our knowledge, this issue in the context of nonlinear TF analysis, like the synchrosqueezing transform (SST) and reassignment method (RM), is less discussed. Further, the discussion of the interaction between the asymmetric window and the TF analysis is limited, possibly due to the lack of a clear definition of latency in the context of TF analysis.

\subsection{Defining latency for TF analysis: an extension from FIR filter design}

The most well-known definition of the intrinsic latency could be seen in filter design. Similar to the window function in TF analysis, the filter introduces intrinsic latency because of its finite length. Given an input signal $x(t)$ and a finite-impulse-response (FIR) filter $h(t)$, the intrinsic latency of a filter is known as the {\em group delay} of its transfer function. Precisely, the filtered (output) signal of $x$ is represented as $y(t)=h(t)*x(t)$, or $\hat{y}(\omega)=\hat{h}(\omega)\hat{x}(\omega)$, where $\hat{\cdot}$ is the Fourier transform. Let $\hat{h}(\omega)=|\hat{h}(\omega)|e^{i\Phi^{(h)}(\omega)}$. The group delay is represented as the negative time derivative of the phase function of $\hat{h}(\omega)$:
\begin{equation}
\Gamma^{(h)}(\omega):=-\frac{\partial \Phi^{(h)}(\omega)}{\partial\omega}=\mathfrak{R} \frac{\widehat{\mathcal{T}h}(\omega)}{\hat{h}(\omega)}
\label{eq:gdf_filter}
\end{equation}
where $(\mathcal{T}h)(t)=t h(t)$. Designing a {\em minimum-phase (MP) filter} is a well-known technique for minimizing the intrinsic latency of a filter \cite{Oppenheim_Schafer:2009}. A MP filter is a MP sequence, which has the smallest group delay among all transfer functions having the same magnitude in the frequency domain.

While this notion of the filter's intrinsic latency is widely applied in the signal processing field, however, it cannot be directly applied to TF analysis, since the definition of group delay in TF analysis is fundamentally different from that in the FIR filter. First, the group delay of the FIR filter is fully determined by the filter transfer function, but the one of a TFR depends on not only the window function, but also the input signal and time, due to non-linear and non-stationary features of the signal. This could be seen by comparing (\ref{eq:gdf_filter}) and (\ref{eq:gdf}). Second, in FIR filters, we are only interested in the signal lying in the {\em pass-band} of the filter, but in TF analysis we need to consider all frequencies since the structure of the signal is unknown.

It is also not all that direct to apply the MP filter design method to design a window minimizing the latency in TF analysis. For example, most of the commonly-used symmetric windows, like the Hamming or Hann windows, cannot be converted into asymmetric MP windows because all of their zeros in the z-plane lie on the unit circle. This implies that, in addition to properties in the spectrum like main-lobe width, side-lobe level and others, the position of zeros in the z-plane should be a condition in designing MP windows.

In brief, to relate the notion of latency in filters to TF analysis seems intuitive, but by no means trivial. In the following sections, we will introduce more theoretical backgrounds and techniques to connect the two fields. From the discussion in Section \ref{sec: tf_latency}, we will show that the intrinsic latency of a filter could be successfully extended to TF analysis. More specifically, in our discussion, an FIR low-pass filter (LPF) can be regarded as a special case of a window function at zero frequency in the TFR.

\subsection{Our contribution}

In this study, we give a systematic discussion on the latency minimization of three intimately related but different TF analysis techniques, including the STFT, SST and RM, by using asymmetric windows.
First, we give a systematic discussion about how the notion of intrinsic latency depends on the TF analysis and the signal.
Second, we provide a signal-independent definition of intrinsic latency for an asymmetric window, which serves as a base for the discussion of latency.
Third, to balance the trade-offs between latency minimization and the spectral deformation provided, the defined intrinsic latency is applied to construct asymmetric windows with theoretical guarantee for the TF analysis.
To demonstrate the result, we apply the proposed method on the real-time musical onset detection task. Results show the benefit of using an asymmetric window for achieving a very low latency.
To the best of our knowledge, a rigorous definition of intrinsic latency of an asymmetric window, the influence of the asymmetric window on the TF analysis, as well as the design of asymmetric windows for the TF analysis, were all unexplored in the past.

This paper is organized as follows. Section \ref{sec: tf_latency} summarizes TF analysis techniques, including STFT, SST and RM, and the notion of intrinsic latency is provided and justified. Section \ref{sec:mlwin} introduces a systematic way of constructing asymmetric windows with the minimal intrinsic latency for TF analysis techniques.
Section \ref{sec:application} goes for the application on real-time musical onset detection and results.

\section{TF representation considering latency}
\label{sec: tf_latency}

In this section, we introduce TF analysis methods discussed in this paper. To simplify the discussion, we focus on typical linear and nonlinear methods including STFT, SST and RM. The Wigner-Ville distribution (WVD) is also introduced for the need of a detailed analysis of intrinsic latency. Then we provide a definition of intrinsic latency inherited from the window, and discuss how the TFR are influenced by the intrinsic latency caused by the chosen window.

\subsection{Adaptive harmonic model}

Fix a sufficiently small non-negative constant $\epsilon$, $d>0$, and two positive constants $c_2\geq c_1$ larger than $\epsilon$. Consider the set $\mathcal{A}_{\epsilon,d}^{c_1,c_2}$ consisting of functions $x(t) = \sum_{l=1}^LA_l(t)\cos(2\pi\phi_l(t))$, where $L\in\NN$ and for each $l=1,\ldots,L$ and for all $t\in\RR$, the function satisfies the
{\em regularity conditions}:
\begin{equation}
A_l\in C^1(\RR)\cap L^\infty(\RR),\,\, \phi_l\in C^2(\RR),\label{AHM:condition1}
\end{equation}
the {\em boundedness conditions}:
\begin{align}
c_1\leq  A_l(t)\leq c_2,\,\, c_1\leq \phi_l'(t)\leq c_2\label{AHM:condition2}
\end{align}
and the {\em slowly varying} conditions
\begin{equation}
|A_l'(t)|\leq \epsilon\phi'(t),\quad |\phi_l''(t)|\leq \epsilon\phi'(t)\label{AHM:condition3}.
\end{equation}
Further, if $L>1$, we have the {\em separation condition}
\begin{equation}\label{definition:adaptiveHarmonicMultiple}
\phi_{\ell+1}'(t)-\phi'_\ell(t)>d
\end{equation}
for all $\ell=1,\ldots,L-1$ and $t\in\RR$. We call $A_l(t)\cos(2\pi\phi_l(t))$ in $x(t)$ an {\em intrinsic mode type (IMT) function}, and $\mathcal{A}_{\epsilon,d}^{c_1,c_2}$ the {\em adaptive harmonic model}.
Intuitively, we could call $A_l(t)$ the {\em amplitude modulation (AM)} and $\phi'_l(t)$ the {\em instantaneous frequency (IF)} of the $l$-th IMT function, and these nominations are justified by the identifiability issue discussed in \cite{Chen_Cheng_Wu:2014}.

\subsection{Time-frequency representations}

\subsubsection{STFT} For a given function $x$ and window $h$ in the proper space, STFT is defined in (\ref{eq: stft1})  \cite{Flandrin:1999}. In this paper, we assume that $x$ is a tempered distribution and $h$ is a Schwartz function. Note that functions satisfying the adaptive harmonic model are tempered distributions.

\subsubsection{SST} The main limitation of STFT is the ``spreading'' issue caused by the Heisenberg uncertainty principle. To sharpen the TFR determined by STFT, or the spectrogram, the reassignment technique was first introduced in \cite{Kodera_Gendrin_Villedary:1978}, and then improved by \cite{Auger_Flandrin:1995}. We refer the reader with interest to \cite{Daubechies_Wang_Wu:2016} for a summary of the current progress in this direction. SST is a special reassignment technique, which counts on the {\em frequency reassignment rule} to sharpen the TFR, which is defined in every points $(t,\eta)$ by:
\begin{align}
\Omega^{(h,\Theta)}_x(t,\eta)=-\Im\frac{V_x^{(\mathcal{D}h)}(t,\eta)}{V_x^{(h)}(t,\eta)}\mbox{ when }  |V_x^{(h)}(t,\eta)|>\Theta\label{RM:omega}
\end{align}
and $\Omega^{(h,\Theta)}_x(t,\eta)=-\infty$ otherwise,
where $\Im$ means taking the imaginary part, $\Theta\geq0$ is the chosen hard threshold to reduce the numerical error and noise influence and $\mathcal{D}h:=h'$, the first derivative of $h$.
The main idea in SST is that the phase hidden in STFT includes the correct ``frequency information'' about the signal. Particularly, when the signal $x(t)$ satisfies the adaptive harmonic model, it has been shown in \cite{Wu:2011thesis} that
\begin{equation}\label{Expansion:OmegaFunctionOverAdaptive}
\Omega^{(h,\Theta)}_{x}(t,\eta)\approx \phi_l'(t)-\eta
\end{equation}
when $\eta\in [\phi'_l(t)-\Delta,\phi'_l(t)+\Delta]$, where the Fourier transform of $h$ is essentially supported on $[-\Delta,\Delta]$. Thus, $\Omega_x^{(h,\Theta)}$ does provide the IF information, which could be taken into account to sharpen the TFR by ``reassigning'' the coefficients of STFT, leading to the SST algorithm. In \cite{Daubechies_Lu_Wu:2011}, SST is defined as:
\begin{align}\label{definition:SST}
S^{(h,\Theta,\alpha)}_{x}(t,\xi):=\int_{\mathfrak{N}_t}V^{(h)}_x(t,\eta)g_\alpha\left(|\xi-\eta-\Omega^{(h,\Theta)}_x(t,\eta)|\right)\ud \eta,
\end{align}
where $t\in\RR$, $\xi\geq0$, $\mathfrak{N}_t:=\{\eta:\,|V^{(h)}_x(t,\eta)|>\Theta\}$, $0<\alpha\ll 1$ is chosen by the user, $g_{\alpha}(\cdot):=\frac{1}{\alpha}g(\frac{\cdot}{\alpha})$ and $g$ is a smooth function so that $g_\alpha\to \delta$ in the weak sense as $\alpha\to 0$. In other words, in SST, we nonlinearly move STFT coefficients {\em only} on the frequency axis.

\subsubsection{RM} Compared with SST, STFT coefficients are not just moved on the frequency axis, but also on the time axis in RM. Specifically, by defining the {\em STFT group delay} by
\begin{equation}
\Gamma^{(h,\Theta)}_{x}(t,\eta) :=
\mathfrak{R}\frac{V_x^{(\mathcal{T}h)}(t, \eta)}{V_x^{(h)}\left(t, \eta \right)}\mbox{ when }  |V_x^{(h)}(t,\eta)|>\Theta
\label{eq:gdf}
\end{equation}
and $\Gamma^{(h,\Theta)}(t,\eta)=-\infty$ when $|V_x^{(h)}(t,\eta)|\leq \Theta$, where $\mathfrak{R}$ means taking the real part and $\mathcal{T}h\left(t\right):=t \cdot h\left(t\right)$, the RM is defined as
\begin{align}
R^{(h,\Theta,\alpha)}_{x}(t,\xi):=&
\int_{\mathfrak{N}_t}\int_{\mathfrak{M}_\xi}V^{(h)}_x(s,\eta)g_\alpha\left(|\xi-\eta-\Omega^{(h,\Theta)}_x(t,\eta)|\right)\nonumber\\
&\times g_\alpha\left(|t-s-\Gamma^{(h,\Theta)}_x(s,\xi)|\right)\ud s\ud \eta,\label{definition:RM}
\end{align}
where $\mathfrak{M}_\xi:=\{s:\,|V^{(h)}_x(s,\xi)|>\Theta\}$. In comparison with (\ref{eq:gdf_filter}), (\ref{eq:gdf}) stems from the derivative of the phase of $V^{(h)}_x$ with respect to frequency \cite{Auger_Flandrin:1995}. Note that in \cite{Auger_Flandrin:1995}, the RM is defined by reassigning the spectrogram $|V^{(h)}_x|^2$ instead of $V^{(h)}_x$ as described in (\ref{definition:RM}). Here, to be consistent and emphasize the relationship between SST and RM, we stick our definition to (\ref{definition:RM}). We emphasize that the STFT group delay is different from the notion of {\em group delay of a window function}.

Throughout this paper, we set $\Theta$ as $10^{-4}\%$ of the root mean square energy of the signal under analysis and $\alpha$ small enough so that $g_\alpha$ is implemented as a discretization of the Dirac measure. When there is no danger of confusion, we will omit $\Theta,\alpha$ in the notation and simply denote $S^{(h,\Theta,\alpha)}_{x},R^{(h,\Theta,\alpha)}_{x},\Omega_{x}^{(h,\Theta)},\Gamma_{x}^{(h,\Theta)}$ as $S^{(h)}_{x},R^{(h)}_{x},\Omega_{x}^{(h)},\Gamma_{x}^{(h)}$.

\subsubsection{Wigner-Ville distribution} We need WVD \cite{Grochenig:2001} to analyze the intrinsic latency. Take $x\in L^2(\RR)$. The WVD of $f$, denoted as $W_x$, is defined as
\begin{equation}
W_x(\tau,\eta):=\int_{\RR} f(\tau+t/2)f^*(\tau-t/2)e^{-2\pi i\eta t}dt,
\end{equation}
where $\tau,\eta\in\RR$. More generally, when $x(\tau+\cdot/2)x^*(\tau-\cdot/2)$ is a tempered distribution for a fixed $\tau$, the WVD of $x$ could still be defined in the distribution sense. For example, we could define the WVD of a function $x$ satisfying the adaptive harmonic model since $x\in C^1(\RR)\cap L^\infty(\RR)$ and hence $x(\tau+\cdot/2)x^*(\tau-\cdot/2)$ is a tempered distribution.
The key ingredient we need from the WVD is its well-known relationship with the spectrogram for a suitable $x$ and $h$:
\begin{equation}
|V_x^{(h)}(t,\xi)|^2=\iint W_x(\tau,\eta)W_h(\tau-t,\eta-\xi)\ud \tau\ud \eta.\label{Equation:KeyRelationshipWVDSTFT}
\end{equation}

\subsection{The intrinsic latency for TF analysis}
\label{sec:latency}

We now discuss the intrinsic latency of STFT, SST and RM, respectively. Throughout this section, we fix a window $h$ supported on $[-T/2,T/2]$, where $T>0$. Fix a time $t$. In general, for the window $h$, we have to collect the signal up to time $t+T/2$ before we could evaluate Equation (\ref{eq: stft1}) at time $t$.
To emphasize this fact, call
\begin{equation}
t^{(h)}_o:=t+T/2
\end{equation}
the \emph{observation time} associated with the window $h$ at time $t$, which is the {\em front} of the needed signal for STFT at time $t$ associated with the window $h$. Note that the observation time is the same for all windows supported on $[-T/2,T/2]$.

Intuitively, intrinsic latency refers to the difference between the observation time and the time associated with the ``obtained information''. Ideally, the notion of intrinsic latency should depend {\em only} on the window function, but we show here that for all TFRs, this notion might also depend on the signal of interest. Since IF and AM are the most important information we want to obtain from a function satisfying the adaptive harmonic model, we focus on these features to study the intrinsic latency of TF analysis.
We start from stating the following theorem, which considers functions more general than the adaptive harmonic model \cite{Kowalski_Meynard_Wu:2015}.
\begin{thm}\label{Theorem:OmegaGamma}
Fix $\epsilon\geq 0$ sufficiently small, $c_2>c_1>\epsilon$ and $c_3\geq 0$. Take a function $x(t)=a(t)e^{i2\pi\phi(t)}$, where $a(t)$ and $\phi(t)$ satisfy (\ref{AHM:condition1}), (\ref{AHM:condition2})
and $|a'(t)|\leq \epsilon\phi'(t)$, $|\phi''(t)|\leq c_3$ and $|\phi'''(t)|\leq \epsilon\phi'(t)$ over $[t_0-M/2,t_0+M/2]$, where $t_0\in\RR$ and $M>0$. Take a smooth real window $h$ of unit norm and supported on $[-T/2,T/2]$ with $0<T<M$. Then at time $t_0$,
the frequency reassignment rule satisfies
\begin{align}
\Omega_{x}^{(h)}&(t_0,\xi)=\phi'(t_0)+\phi''(t_0)\frac{\int \tau W_h(\tau,\phi''(t_0)\tau+\phi'(t_0)-\xi)\ud \tau}{\int W_h(\tau, \phi''(t_0)\tau+\phi'(t_0)-\xi)\ud \tau}+O(\epsilon)\nonumber
\end{align}
and the STFT group delay satisfies
\begin{equation}
\Gamma_{x}^{(h)}(t_0,\xi) = \frac{\int \tau W_h(\tau,\phi''(t_0)\tau+\phi'(t_0)-\xi)\ud \tau}{\int W_h(\tau, \phi''(t_0)\tau+\phi'(t_0)-\xi)\ud \tau}+O(\epsilon)\nonumber\,.
\end{equation}
\end{thm}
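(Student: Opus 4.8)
The plan is to reduce the statement to an exactly computable quadratic-chirp model and then transfer the result to the Wigner--Ville side by a single change of variables. First I would freeze the local behaviour of $x$ at $t_0$: writing $u=s-t_0$ and Taylor expanding, the hypotheses $|a'|\le\epsilon\phi'$, $|\phi'''|\le\epsilon\phi'$ together with $\phi'\le c_2$ and the support bound $|u|\le T/2$ give, uniformly on the window,
\[
x(s)=a(t_0)\,e^{i2\pi[\phi(t_0)+\phi'(t_0)u+\tfrac12\phi''(t_0)u^2]}\bigl(1+O(\epsilon)\bigr),
\]
where the $O(\epsilon)$ absorbs both the amplitude drift and the cubic phase remainder, while the chirp rate $\phi''(t_0)$ (bounded only by $c_3$) is retained exactly. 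Substituting the leading chirp into $V_x^{(h)},V_x^{(\mathcal{T}h)},V_x^{(\mathcal{D}h)}$ collapses each transform, with $\nu:=\xi-\phi'(t_0)$, to the common prefactor $a(t_0)e^{i2\pi(\phi(t_0)-\xi t_0)}$ times $I_0(\nu),I_1(\nu),I_D(\nu)$ respectively, where $I_k(\nu)=\int u^k h(u)e^{i\pi\phi''(t_0)u^2}e^{-2\pi i\nu u}\,\ud u$ and $I_D(\nu)=\int h'(u)e^{i\pi\phi''(t_0)u^2}e^{-2\pi i\nu u}\,\ud u$; the prefactor cancels in every ratio.

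The heart of the argument is an algebraic identity tying these integrals to moments of $W_h$ along the line $\eta=\phi''(t_0)\tau+\phi'(t_0)-\xi$. Expanding $W_h(\tau,\phi''(t_0)\tau-\nu)$ from its definition and passing to the variables $a=\tau+s/2$, $b=\tau-s/2$ (unit Jacobian), the quadratic exponent separates as $[-i\pi\phi''(t_0)a^2+2\pi i\nu a]+[i\pi\phi''(t_0)b^2-2\pi i\nu b]$, so each double integral factorizes into a product of two $I_k$-type integrals. Since $h$ is real, this yields
\[
\int W_h(\tau,\phi''(t_0)\tau-\nu)\,\ud\tau=|I_0(\nu)|^2,\qquad
\int \tau\,W_h(\tau,\phi''(t_0)\tau-\nu)\,\ud\tau=\mathfrak{R}\!\left[I_1(\nu)\overline{I_0(\nu)}\right],
\]
so the stated WVD ratio equals $\mathfrak{R}\,(I_1/I_0)=\mathfrak{R}\,V_x^{(\mathcal{T}h)}/V_x^{(h)}$, which is the $\Gamma$ formula. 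For the denominator this is exactly the spectrogram--WVD identity~\eq{Equation:KeyRelationshipWVDSTFT} evaluated on the chirp, whose WVD is the Dirac mass $\delta(\eta-\phi'(t_0)-\phi''(t_0)(\tau-t_0))$ on its instantaneous-frequency line; the numerator is the analogous $\mathcal{T}h$-weighted cross term.

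The $\Omega$ formula then follows for free from an integration by parts, $I_D(\nu)=2\pi i\bigl(\nu I_0(\nu)-\phi''(t_0)I_1(\nu)\bigr)$: taking $-\Im$ of $I_D/I_0$ reproduces $\phi'(t_0)$ plus $\phi''(t_0)$ times the same ratio, the frequency-shift and $2\pi$ normalisation being fixed by the definitions~\eq{RM:omega} and~\eq{eq:gdf}. I would emphasise that once $\Gamma$ is identified, $\Omega$ requires no new estimate.

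The last task is the error bookkeeping. Each of $V_x^{(h)},V_x^{(\mathcal{T}h)},V_x^{(\mathcal{D}h)}$ equals its chirp value times the prefactor plus a remainder controlled by the $O(\epsilon)$ factor above integrated against an absolutely integrable, compactly supported kernel, hence each remainder is $O(\epsilon)$. The one genuinely delicate point is that the reassignment rules are ratios: to conclude $\frac{A+O(\epsilon)}{B+O(\epsilon)}=\frac{A}{B}+O(\epsilon)$ I need $|B|=|I_0(\nu)|$ bounded below, i.e. the analysis must be confined to the band where $\int W_h(\tau,\phi''(t_0)\tau-\nu)\,\ud\tau$ is non-degenerate (equivalently $|V_x^{(h)}|>\Theta$). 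This is precisely where the threshold in the definitions does its work, and producing a clean, uniform lower bound for $|I_0|$ on the relevant frequency band --- rather than the oscillatory-integral estimates themselves --- is what I expect to be the main obstacle; outside that band both sides default to $-\infty$ and nothing is claimed.
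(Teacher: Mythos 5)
Your proposal is correct and follows the same overall strategy as the paper: freeze the signal to the linear chirp $x_0(t)=e^{i2\pi(\phi''(t_0)(t-t_0)^2/2+\phi'(t_0)(t-t_0)+\phi(t_0))}$ with an $O(\epsilon)$ error absorbed from the amplitude drift and the cubic phase remainder, then evaluate the reassignment operators exactly on the chirp via the Wigner--Ville distribution. Where you differ is in how the chirp case is handled. The paper invokes the known identities (citing Flandrin, p.~222) expressing $\Omega^{(h)}$ and $\Gamma^{(h)}$ as WVD-weighted centroids, and then substitutes $W_{x_0}(\tau,\eta)=\delta(\eta-\phi_0'(\tau))$ to collapse the double integrals onto the instantaneous-frequency line. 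You instead compute the three STFTs directly as the oscillatory integrals $I_0,I_1,I_D$ and \emph{prove} the required moment identities $\int W_h(\tau,\phi''\tau-\nu)\,\ud\tau=|I_0(\nu)|^2$ and $\int\tau W_h(\tau,\phi''\tau-\nu)\,\ud\tau=\mathfrak{R}[I_1(\nu)\overline{I_0(\nu)}]$ by the change of variables $a=\tau+s/2$, $b=\tau-s/2$, obtaining $\Omega$ from $\Gamma$ by the integration by parts $I_D=2\pi i(\nu I_0-\phi''I_1)$; both identities check out. This buys a self-contained argument in place of a citation, and your final paragraph makes explicit something the paper leaves silent: dividing the $O(\epsilon)$-perturbed numerator by the perturbed denominator requires a lower bound on $|I_0(\nu)|$, i.e.\ the conclusion is only meaningful on the band where $|V_x^{(h)}(t_0,\xi)|>\Theta$, which is indeed where the thresholded definitions \eq{RM:omega} and \eq{eq:gdf} apply. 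One caveat: the $2\pi$ factor and the additive $-\xi$ that your computation produces in $\Omega$ (consistent with \eq{Expansion:OmegaFunctionOverAdaptive}) versus their absence in the theorem's display reflect a normalization inconsistency already present in the paper between \eq{RM:omega} and the centroid formula used in its proof, so attributing it to convention, as you do, is the right call rather than a gap in your argument.
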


We call the function $x(t)=a(t)e^{i2\pi\phi(t)}$ the \textit{generalized IMT} function and $\phi''(t)$ the \textit{chirp factor} of the function.

\begin{proof}
Without loss of generality, we could assume $a(t_0)=1$. By taking the Taylor expansion technique at time $t_0$ as that in the proof in \cite{Daubechies_Lu_Wu:2011}, we could approximate $|V_{x}^{(h)}(t_0,\xi)|^2$, $\Omega_{x}^{(h)}$ and $\Gamma_{x}^{(h)}$ by $|V_{x_0}^{(h)}(t_0,\xi)|^2$, $\Omega_{x_0}^{(h)}$ and $\Gamma_{x_0}^{(h)}$ with an error of order $\epsilon$, where $x_0$ is a linear chirp function
\[
x_0(t)=e^{i2\pi (\phi''(t_0) (t-t_0)^2/2+\phi'(t_0) (t-t_0)+\phi(t_0))}.
\]
To simplify the notation, denote
\[
\phi_0(t):=\phi''(t_0) (t-t_0)^2/2+\phi'(t_0) (t-t_0)+\gamma.
\]
Note that around $t_0$, the IF of $x_0$ is $\phi_0'(t)=\phi''(t_0) (t-t_0)+\phi'(t_0)$. By the relationship between the spectrogram and the WVD (\ref{Equation:KeyRelationshipWVDSTFT}), the spectrogram of $x_0$ at $t_0$ could be directly evaluated by
\begin{align}
&|V_{x_0}^{(h)}(t_0,\xi)|^2=\iint W_{x_0}(\tau,\eta)W_h(\tau-t_0,\eta-\xi)\ud \tau\ud \eta\nonumber\\
=&\,\int W_h(\tau-t_0, \phi_0'(\tau)-\xi)\ud \tau\,,
\end{align}
where $W_{x_0}(\tau,\eta)=\delta(\eta-\phi_0'(\tau))$ and $\delta$ is the Dirac measure. By \cite[p.222]{Flandrin:1999}, the reassignment rule in (\ref{RM:omega}) of $x_0$ could be equally expressed by
\begin{align}
\Omega^{(h)}(t_0,\xi)=&\,\frac{\iint \eta W_{x_0}(\tau,\eta)W_h(\tau-t_0,\eta-\xi)\ud \eta\ud \tau}{|V_{x_0}^{(h)}(t_0,\xi)|^2}\nonumber\\
=&\,\frac{\int \phi_0'(\tau) W_h(\tau-t_0, \phi_0'(\tau)-\xi)\ud \tau}{\int W_h(\tau-t_0, \phi_0'(\tau)-\xi)\ud \tau}\label{Proof:Theorem:Omega}\\
=&\, \phi'(t_0)+\phi''(t_0)\frac{\int \tau W_h(\tau,\phi_0'(\tau+t_0)-\xi)\ud \tau}{\int W_h(\tau, \phi'_0(\tau+t_0)-\xi)\ud \tau}\,,\nonumber
\end{align}
where the second equality comes from (\ref{Equation:KeyRelationshipWVDSTFT}) and the third equality comes from directly plugging $\phi_0'(\tau)$ inside.
Similarly, by \cite[p.222]{Flandrin:1999}, the group delay $\Gamma^{(h)}$ in (\ref{eq:gdf}) could be equally expressed by
\begin{align}
\Gamma^{(h)}(t_0,\xi)=&\,\frac{\iint \tau W_{x_0}(\tau,\eta)W_h(\tau-t_0,\eta-\xi)\ud \eta\ud \tau}{|V_{x_0}^{(h)}(t_0,\xi)|^2}\nonumber\\
=&\,\frac{\int \tau W_h(\tau,\phi_0'(\tau+t_0)-\xi)\ud \tau}{\int W_h(\tau, \phi_0'(\tau+t_0)-\xi)\ud \tau}\,,\label{Proof:Theorem:Gamma}
\end{align}
where the second equality holds by a direct calculation.
\end{proof}

We mention that this theorem is for the function having one generalized IMT function. For the case with more than one generalized IMT function, the proof is the same, except by taking into account the frequency separation condition (\ref{definition:adaptiveHarmonicMultiple}) and a well-chosen window support $T$ depending on $d$. Since the technique is the same as that in \cite{Daubechies_Lu_Wu:2011}, we skip the details. Based on Theorem \ref{Theorem:OmegaGamma}, we could discuss/define the  ``intrinsic latency'' for different TF analysis techniques.

\begin{figure}[]
    \centering
    \includegraphics[width=0.85\textwidth]{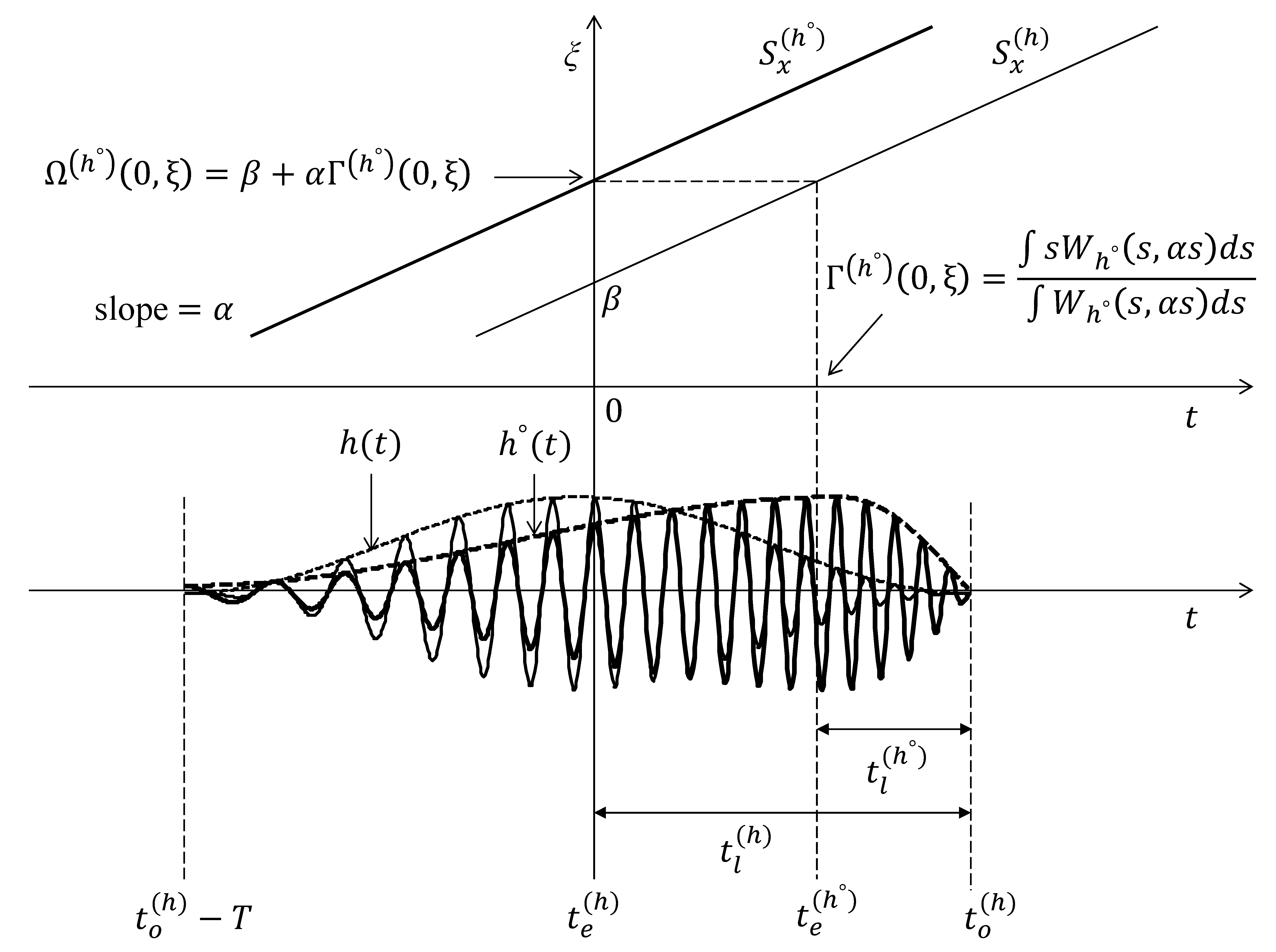}
    \caption{Conceptual illustration of short-time windowing and SST representations of the chirp signal with $\phi'(t)=\alpha t+\beta$ using a symmetric window (bold gray) and a asymmetric window (bold black). The symmetric and asymmetric windows are sketched in dashed lines. The definitions of the timestamps used in this paper, including the observation time $t_o$, the estimation time of the symmetric window $t^{(h)}_e$, and of the asymmetric window $t^{(h^{\circ})}_e$, are also illustrated respectively.}
    \label{fig:latency_demo}
\end{figure}

\subsubsection{STFT and SST}

To discuss and define the intrinsic latency for a TF analysis, we should make clear which information we expect to extract from the TFR. When the signal is oscillatory, the information of interest is how fast the signal oscillates, i.e., the IF, which could be extracted from the TFR. Based on Theorem \ref{Theorem:OmegaGamma}, we focus on quantifying how much lag the IF could be estimated.

First, we show that whether a given window is symmetric leads to different results. When $h$ is symmetric, due to the reflection symmetry, we have $W_h(\tau,c\tau)=W_h(-\tau,-c\tau)$ for all $\tau\geq 0$ and $c\geq0$. So
\[
\int \tau W_h(\tau,\phi''(t_0)\tau)\ud \tau=0.
\]
Thus, when $\int W_h(\tau,\phi''(t_0)\tau)\ud \tau\neq 0$,
\[
\Omega_x^{(h)}(t_0,\phi'(t_0))=\phi'(t_0)+O(\epsilon),
\]
which is the IF of $x$ at time $t_0$.
However, when $h$ is asymmetric, $\int \tau W_h(\tau,\phi''(t_0)\tau)\ud \tau$ may not be $0$, so
\[
\Omega^{(h)}(t_0,\phi'(t_0))=\phi'(t_0)+\phi''(t_0)\frac{\int \tau W_h(\tau,\phi''(t_0)\tau)\ud \tau}{\int W_h(\tau, \phi''(t_0)\tau)\ud \tau}+O(\epsilon),
\]
which may not be $\phi'(t_0)$ when $\int W_h(\tau, \phi''(t_0)\tau)\ud \tau\neq 0$. Thus, in general, at time $t_0$, by $\Omega^{(h)}(t_0,\phi'(t_0))$ we obtain the IF information at time $t_0+\frac{\int \tau W_h(\tau, \phi''(t_0)\tau)\ud \tau}{\int W_h(\tau, \phi''(t_0)\tau)\ud \tau}$ since
\begin{align}
\phi'_0\big(t_0+\frac{\int \tau W_h(\tau, \phi''(t_0)\tau)\ud \tau}{\int W_h(\tau,\phi''(t_0)\tau)\ud \tau}\big)=\,\phi'(t_0)+\phi''(t_0)\frac{\int \tau W_h(\tau,\phi''(t_0)\tau)\ud \tau}{\int W_h(\tau, \phi''(t_0)\tau)\ud \tau}.
\end{align}
Motivated by the above discussion, for a IMT function with IF $\phi'(t)$, we could consider the following quantity, called \textit{extrinsic latency of SST} and denoted as $t^{(h)}_{ex}$, associated with time $t$:
\begin{align}
t^{(h)}_{ex} := t+\frac{\int \tau W_h(\tau,\phi''(t_0)\tau)\ud \tau}{\int W_h(\tau, \phi''(t_0)\tau)\ud \tau}
\label{eq:center0}\,,
\end{align}
where the second term of (\ref{eq:center0}) indicates the temporal deviation of the obtained spectral information.
Note that this definition depends not only on the window but also on the signal, and hence the nomination.

The quantity $t^{(h)}_{ex}$ is supported by Theorem \ref{Theorem:OmegaGamma}, and is valid when the signal is composed of more than one IMT function. We could thus quantify the intrinsic latency of SST according to $t^{(h)}_{ex}$. It is well known that when the amplitude is slowly varying, the frequency reassignment rule well approximates the ridge of the spectrogram. While the IF information could be obtained by reading the spectrogram ridge, we could also define the \textit{extrinsic latency of STFT} by $t^{(h)}_{ex}$.

However, this quantity could not be evaluated since in general we do not have enough information about the signal's structure (e.g., AM, IF, single or multiple IMFs, etc.). We thus need a signal-independent quantity that reveals latency information.
To achieve this goal, we consider a slight modification of (\ref{eq:center0}) by taking ``all possible frequencies'' into account, and introduce the following definition for the \textit{intrinsic estimation time of STFT and SST}:
\begin{defn}
The \emph{intrinsic estimation time of STFT and SST} associated with time $t$ is defined as
\begin{align}
t^{(h)}_e &:=t+ \frac{\iint s W_h(s, \xi)\ud s\ud\xi}{\iint W_h(s, \xi)\ud s\ud \xi}
\,,\label{eq:center}
\end{align}
\end{defn}
Note that by a direct calculation, the estimation time satisfies
\[
t^{(h)}_e=t+\frac{\int \tau  |h(\tau)|^2 d\tau}{\int |h(\tau)|^2 d\tau}.
\]
If $h$ is {\em symmetric}, then $t^{(h)}_e=t$; otherwise $t^{(h)}_e$ might deviate from $t$. Note that the last term of (\ref{eq:center}) is nothing but the gravity center of the window $h$. This result is similar to the time delay index (TDI) defined in \cite{florencio1991use}, but with a different normalization factor.

Equation (\ref{eq:center0}) can be further explained by two special case: a sinusoid-like signal and an impulse-like signal. If the signal is sinusoid-like, i.e., $\phi''(t)=0$, then $t^{(h)}_{ex}$ becomes
\begin{align}
t^{(h)}_{e0} &:=t+ \frac{\int s W_h(s, 0)\ud s}{\int W_h(s, 0)\ud s}
=t+\frac{\int \tau  h(\tau) d\tau}{\int h(\tau)d\tau}\,.\label{eq:center2}
\end{align}
If we consider $h(t)$ as a low-pass filter (LPF), then by (\ref{eq:gdf_filter}) we could find that $\frac{\int \tau  h(\tau) d\tau}{\int h(\tau)d\tau}=\Gamma^{(h)}(0)$, meaning that (\ref{eq:center2}) corresponds to the group delay of $h(t)$ at the center of its pass-band, i.e. zero frequency. On the other hand, when the signal is impulse-like, i.e., $|\phi''(t)|\gg 1$, then it is natural to consider the integration through ``all possible frequencies'' in the TFR. In this way, $t^{(h)}_{ex}$ converges to $t^{(h)}_e$.
In other words, for narrow-band signals, $t^{(h)}_{e0}$ is a good approximation for $t^{(h)}_{ex}$; while for the broad-band signals, $t^{(h)}_e$ is a good approximation for $t^{(h)}_{ex}$. 
Table \ref{tab:comparison} summaries the differences of group delay and intrinsic latency between an FIR LPF and a window function.
For general signals that are supposed to be non-stationary and with unknown structure, we opt for using $t^{(h)}_{e}$ as the definition of intrinsic latency in performing TF analysis. Therefore we introduce the following definition of \textit{intrinsic latency of STFT and SST}.

\begin{defn}The \textit{intrinsic latency of STFT and SST} at time $t$, denoted as $t^{(h)}_l$, is defined to be the difference of the observation time and the estimation time:
\begin{align}
t^{(h)}_l:=t^{(h)}_o-t^{(h)}_e.
\label{eq: lat}
\end{align}
\end{defn}
It is clear that if $h$ is symmetric, then the intrinsic latency would be $T/2$. Also note that $t^{(h)}_o$ is strictly greater than $t^{(h)}_e$ so $t^{(h)}_l$ is strictly positive, unless we allow a discontinuous window function.

\begin{table}
\centering
\caption{Comparison of the group delay $\Gamma^{(h)}$ and intrinsic latency $t^{(h)}_l$ for low-pass filters and window functions in TF analysis}
\begin{tabular}{|c|c|}
\hline
 $h(t)$ as an FIR LPF & $h(t)$ as a window function\\
\hline
 \(\displaystyle \Gamma^{(h)}(\omega) =\left.\mathfrak{R} \frac{\widehat{\mathcal{T}h}(\omega)}{\hat{h}(\omega)}\right|_{\omega\approx 0}\)& \(\displaystyle \Gamma^{(h)}_{x}(t, \omega) = \mathfrak{R}\frac{V_x^{(\mathcal{T}h)}(t, \omega)}{V_x^{(h)}(t, \omega)}\)\\
\hline
 \(\displaystyle t^{(h)}_l=\frac{T}{2}-\frac{\int \tau  h(\tau) d\tau}{\int h(\tau) d\tau}\) & \(\displaystyle t^{(h)}_l=\frac{T}{2}-\frac{\int \tau  |h(\tau)|^2 d\tau}{\int |h(\tau)|^2 d\tau}\) \\
\hline
\end{tabular}
\label{tab:comparison}
\end{table}

\subsubsection{RM}
Last but not least, the intrinsic latency of the RM need to be discussed. By Theorem \ref{Theorem:OmegaGamma} and the term $g_\alpha\left(|t-s-\Gamma^{(h,\Theta)}_x(s,\xi)|\right)$ in (\ref{definition:RM}), the ``future IF information'' obtained by the frequency reassignment rule is ``forgotten'', since the temporal reassignment rule moves the future IF information back to its original temporal location. Therefore, we do not reduce the intrinsic latency and the {\em intrinsic latency of RM} is defined to be $T/2$. Note that this result re-emphasizes the difference between the minimal latency of TF analyses and the latency of a window. Depending on the nature of the chosen TF analysis, It is not always possible to reduce the latency.

\begin{figure*}[]
    \centering
    \includegraphics[width=\textwidth]{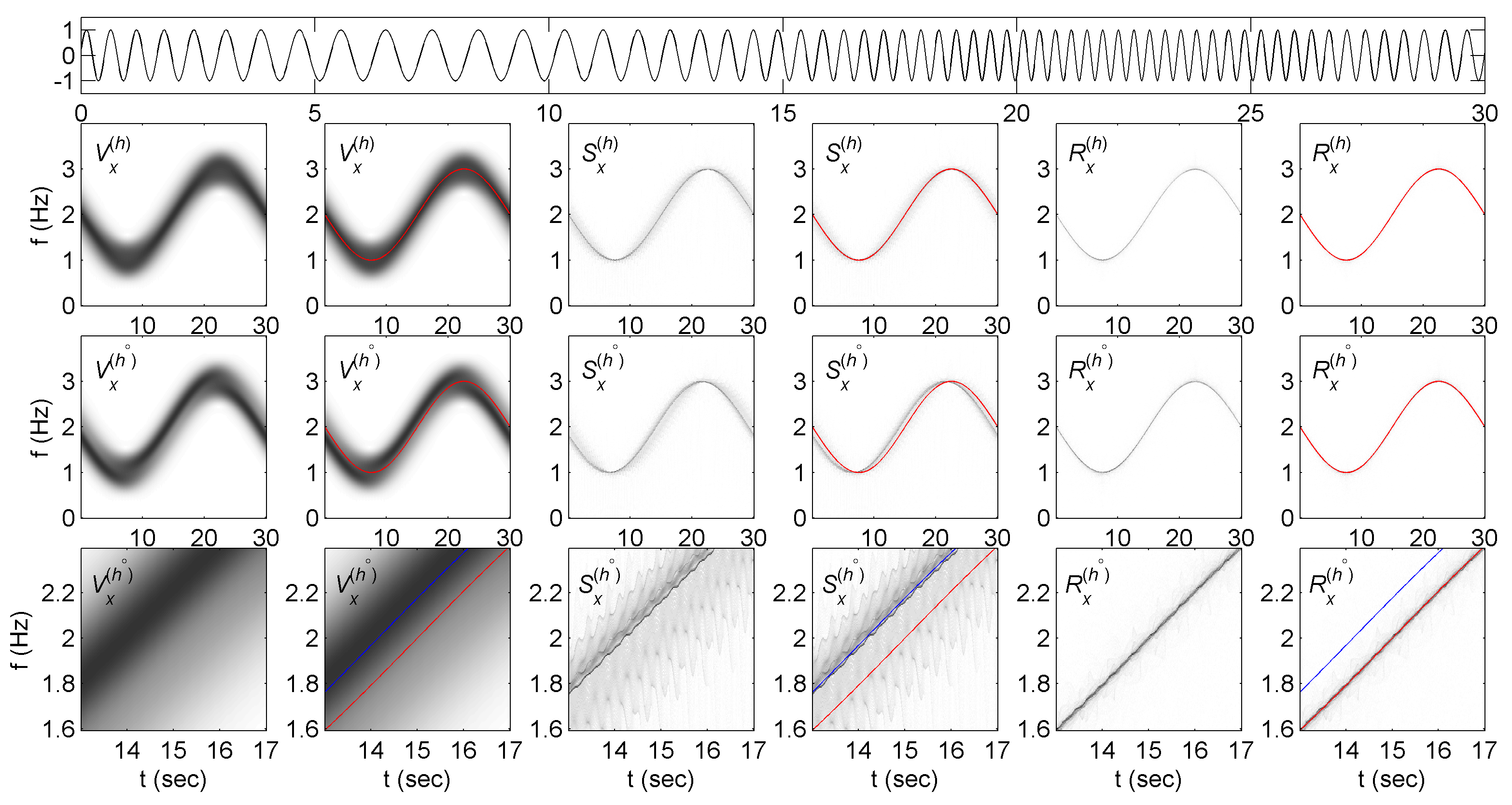}
    \caption{Reduction of the intrinsic latency using an asymmetric window for a synthetic input signal using a symmetric window $h(t)$ and an asymmetric window $h^{\circ}(t)$. Every TFR illustration is repeated twice, one with and one without the superposition of the ground truth. Top row: the signal. The second row, from left to right: STFT ($V^{(h)}_{x}$, 1st and 2nd columns), SST ($S^{(h)}_{x}$, 3rd and 4th columns) and RM ($R^{(h)}_{x}$, 5th and 6th columns). The third row, from left to right: $V^{(h^{\circ})}_{x}, S^{(h^{\circ})}_{x}$ and $R^{(h^{\circ})}_{x}$. Bottom row, from left to right: $V^{(h^{\circ})}_{x}, S^{(h^{\circ})}_{x}$ and $R^{(h^{\circ})}_{x}$ zoomed in on the region $t \in [13, 17]$ s and $f \in [1.6, 2.4]$ Hz. Red line: ground-truth IF. Blue line: ground-truth IF shifted ahead by the difference of intrinsic latency between $h$ and $h^{\circ}$.}
    \label{fig:stft_win}
\end{figure*}

\subsection{Numerical Examples}
Fig. \ref{fig:latency_demo} illustrates SST representations of the chirp signal with $\phi'(t)=\alpha t+\beta$ using a symmetric window $h(t)$ of length $T$ and an asymmetric window $h^{\circ}(t)$, as well as their positions of $t^{(h)}_o$ and $t^{(h)}_e$ under the definition derived from (\ref{eq:center}) and (\ref{eq: lat}). It is clear that the estimated time of $h^\circ$ is closer to the observation time compared with that of $h$, and therefore the intrinsic latency is smaller.

Fig. \ref{fig:stft_win} illustrates how an asymmetric window shifts the frequency trajectory of a TFR due to the reduced intrinsic latency. We consider a signal $x(t)=\cos(4\pi t+30\cos(\pi t/15))$ with the IF $2-\sin(\pi t/15)$. The signal is sampled at a sampling rate 100Hz during the period $[0, 30]$ seconds. The window size is 5 seconds. We illustrate three TFRs (i.e., STFT, SST, RM) using both $h$ and $h^\circ$, and every TFR illustration is shown twice, one with and one without the superposition of the ground truth IF. The ground-truth IF are superimposed to every figure in red line. The spectra of $h$ and $h^\circ$ are chosen to have the same magnitude spectrum.\footnote{The $h$ and $h^\circ$ adopted here are the flat-top and the minimum-phase (MP) flat-top windows, respectively. Detailed information will be introduced in Section \ref{sec:cosinewin}.} Notice that the time axis $t$ in this figure refers to the middle of the window function.

Fig. \ref{fig:stft_win} shows that, in comparison to the ground-truth IF, the ridges of $V^{(h^{\circ})}_{x}$ and $S^{(h^{\circ})}_{x}$ appear {\em earlier} than those of $V^{(h)}_{x}$ and $S^{(h)}_{x}$, respectively. This implies that the intrinsic latency using $h^{\circ}$ should be less than the one using $h$ by an amount that can be evaluated by Equations (\ref{eq:center}) and (\ref{eq: lat}); some calculations show that the intrinsic latency using $h^{\circ}$ is reduced by 0.84 second. In other words, the ridges of $V^{(h^{\circ})}_{x}$ and $S^{(h^{\circ})}_{x}$ should lead the ground-truth IF by this amount. This fact is illustrated in the bottom row by zoomed-in $V^{(h^{\circ})}_{x}$ and $S^{(h^{\circ})}_{x}$ to $t \in [13, 17]$. The superimposed blue lines are the ground-truth IF shifted ahead by exactly 0.84 second. We could see that the ridges are very close to the blue lines.

On the contrary, latency reduction is not observed in $R^{(h^{\circ})}_{x}$, mainly because the time-reassignment terms have included the contribution of latency caused by the asymmetry of the window, and the TFR at every frame are still reassigned back to the estimation time. In other words, RM exhibits $t^{(h)}_l=T/2$ for all kinds of windows; it is blind to see the change of latency by ``breaking the causality'' in time reassignment.

\section{Minimal latency window for TF analysis}
\label{sec:mlwin}

In this section, we construct minimum-latency windows suitable for the TF analysis. To simplify the discussion, we keep using $h$ to refer to the symmetric window and $h^\circ$ to refer to the asymmetric window with the spectral magnitude the same as that of $h$. The window is indexed by discrete time, namely, $h=[h(0)\,\,h(1)\,\,\ldots\,\,h(N-1)]\in\RR^N$, where $N\in\NN$ refers to the window size. Note that when the window $h$ is symmetric, $h(s-t)=h(t-s)$ in (\ref{eq: stft1}). Thus, to simplify the following discussion regarding minimum-phase (MP) signals, we consider $V^{(h)}_x(t, \omega) = \int x(s) h(t-s)e^{-i2\pi\omega s} ds$ and hence $h(0)$ corresponds to the observation time in the TF analysis\footnote{The ordering of the window index is reversed in the temporal axis since this definition can better facilitate the following discussion on the minimum-phase window.}. %

\subsection{Review of minimum-phase signals}

By definition, an MP signal $x\in\RR^N$ has all zeros and poles of its z-transform {\em inside} the unit circle, denoted as $\mathbb{T}=\{z\in\CC|\,|z|=1\}$, \cite{Oppenheim_Schafer:2009}.\footnote{Notice that some previous works adopt a loose definition stating that a MP signal is the one whose zeros are {\em on} or {\em within} the unit circle, for example \cite{damera2000design}. Although we adopt the strict definition in this paper, in most of the engineering problems the loose definition is enough. See the discussion on $\epsilon$-MP signal in the next paragraph.} As the phase of a MP signal is uniquely determined by its amplitude \cite{Oppenheim_Schafer:2009}, we can transform a signal (which might or might not be symmetric) without zeros {\em on} the unit circle in the z-domain to a MP signal while preserving its magnitude spectrum, by taking into account the property that the phase of the Fourier transform of a MP signal is the Hilbert transform of its log magnitude \cite{SASPWEB2011, Oppenheim_Schafer:2009, damera2000design}. Precisely, the MP signal could be constructed by $x_{\mathrm{MP}}=\mathcal{M}x:=\mathcal{F}^{-1}(|\mathcal{F}x| e^{i \mathcal{H}\left[\log |\mathcal{F}x|\right]})$, where $\mathcal{F}$ is the Fourier transform, $\mathcal{H}$ is the Hilbert transform and $\mathcal{F}^{-1}$ is the inverse Fourier transform. We name $\mathcal{M}$ as the {\em minimum-phase transform}.
The MP transform has been taken as a method in designing MP filters with low-latency properties \cite{damera2000design}: we can convert a linear-phase (symmetric) filter to an MP filter and let both have the same amplitude transfer function by means of the MP transform.
An important property of a MP signal relevant to us is that among all signals with the same spectral magnitude, the MP signal has the maximal concentration of energy toward the observation time \cite[p.290]{Oppenheim_Schafer:2009}:
\begin{equation}
\sum^{k}_{n=0}|x_{\MP}(n)|^2 \geq \sum^{k}_{n=0}|x(n)|^2
\label{eq:mp_energy}
\end{equation}
for all $k=0,1,\cdots,N-1$. This property exactly facilitates our requirement of deriving a low-latency and asymmetric window with known spectrum information for the TF analysis.

One issue of applying the MP transform is that the z-transforms of several widely applied symmetric windows have zeros on $\mathbb{T}$. To resolve this issue, an intuitive approach taken in practice is the {\em $\epsilon$-perturbation}; that is, we could construct an MP window $h_{\eMP}$ from $h$ by
\begin{align}\label{Definition:epsilonMinimumPhase}
h_{\eMP}=\mathcal{M}_\epsilon h:=\mathcal{F}^{-1}((|\mathcal{F}h|+\epsilon) e^{i \mathcal{H}\left[\log (|\mathcal{F}h|+\epsilon)\right]}),
\end{align}
\noindent where $0<\epsilon\ll 1$ is chosen so that $|\mathcal{F}h|+\epsilon\neq 0$ on $\mathbb{T}$. We call $\mathcal{M}_\epsilon$ the {\em $\epsilon$-MP transform}, and call $h_{\eMP}$ the {\em $\epsilon$-minimum-phase} associated with $h$. It is clear that for a fixed $\epsilon$, $h_{\eMP}$ is unique, by the uniqueness property of a MP window. While the $\epsilon$-MP transform was applied in the past in designing asymmetric windows, however, it is not clear if it satisfies the property that the energy is maximally concentrated toward the observation time. Below we provide a positive answer.

\begin{thm}
For $\epsilon> 0$, $h_{\eMP}$ is maximally concentrated toward the index $0$ up to $\epsilon$; that is, $\sum^{k}_{n=1}|h_{\eMP}(n)|^2 \geq \sum^{k}_{n=1}|h(n)|^2-\epsilon$ for all $k$.
\end{thm}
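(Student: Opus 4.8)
The plan is to deduce the statement from the energy-concentration inequality \eqref{eq:mp_energy}, which has already been recorded for genuine minimum-phase signals, rather than proving anything about $h_{\eMP}$ from scratch. The obstacle to a one-line reduction is that $h_{\eMP}$ does \emph{not} share the magnitude spectrum of $h$: by \eqref{Definition:epsilonMinimumPhase} its magnitude spectrum is $|\mathcal{F}h|+\epsilon$, not $|\mathcal{F}h|$, so \eqref{eq:mp_energy} cannot be invoked on the pair $(h,h_{\eMP})$ directly. I would therefore interpose an auxiliary sequence $h_\epsilon$ that carries the correct inflated magnitude spectrum $|\mathcal{F}h|+\epsilon$ but remains $O(\epsilon)$-close to $h$ in $\ell^2$, and run the comparison through it.

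Concretely, I would first define $h_\epsilon$ on the frequency side, keeping the phase of $h$ and inflating only the magnitude:
\[
\mathcal{F}h_\epsilon(\omega):=(|\mathcal{F}h(\omega)|+\epsilon)\,u(\omega),\qquad u(\omega):=\begin{cases}\mathcal{F}h(\omega)/|\mathcal{F}h(\omega)| & \text{if }\mathcal{F}h(\omega)\neq 0,\\ 1 & \text{if }\mathcal{F}h(\omega)=0,\end{cases}
\]
so that $|\mathcal{F}h_\epsilon|=|\mathcal{F}h|+\epsilon$ holds identically, the unit-modulus factor $u$ supplying a harmless phase even at the zeros of $\mathcal{F}h$ (which is exactly where a naive sign function would be undefined). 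Since $h$ is the symmetric window, $\mathcal{F}h$ is real and even, hence $u=\pm1$ and $h_\epsilon$ is again real. The key elementary estimate is the pointwise identity $|\mathcal{F}h_\epsilon(\omega)-\mathcal{F}h(\omega)|=\epsilon$ for every $\omega$: at a nonzero of $\mathcal{F}h$ the difference is $\epsilon\,u(\omega)$, and at a zero it is $\epsilon$. Parseval's identity then turns this into the time-domain bound $\|h_\epsilon-h\|_{\ell^2}=O(\epsilon)$, with the constant fixed by the chosen DFT normalization.

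With $h_\epsilon$ in hand the comparison is immediate. Because $h_\epsilon$ and $h_{\eMP}$ have the \emph{same} magnitude spectrum $|\mathcal{F}h|+\epsilon$ and, by \eqref{Definition:epsilonMinimumPhase} together with uniqueness of the minimum-phase representative, $h_{\eMP}=\mathcal{M}h_\epsilon$ is the minimum-phase member of that family, the property \eqref{eq:mp_energy} applies verbatim to give $\sum_{n=0}^{k}|h_{\eMP}(n)|^2\geq\sum_{n=0}^{k}|h_\epsilon(n)|^2$ for every $k$. Then I would expand $|h_\epsilon(n)|^2=|h(n)|^2+2\Re\!\big(\overline{h(n)}\,e(n)\big)+|e(n)|^2$ with $e:=h_\epsilon-h$, discard the nonnegative term $|e(n)|^2$, and control the cross term by Cauchy--Schwarz, $\big|\sum_{n=0}^{k}2\Re(\overline{h(n)}e(n))\big|\leq 2\|h\|_{\ell^2}\|e\|_{\ell^2}=O(\epsilon)$. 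Chaining the two inequalities gives $\sum_{n=0}^{k}|h_{\eMP}(n)|^2\geq\sum_{n=0}^{k}|h(n)|^2-O(\epsilon)$, which is the asserted bound (the multiplicative constant in front of $\epsilon$ is harmless since $\epsilon$ is a free small parameter, and it is $2$ under the unit-norm convention; this matches the indexing of \eqref{eq:mp_energy}).

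I expect the only genuine subtlety to be the construction of $h_\epsilon$ at the zeros of $\mathcal{F}h$ --- precisely the situation of the Hann and Hamming windows whose zeros on $\mathbb{T}$ motivated the $\epsilon$-perturbation in the first place --- and the simultaneous verification that this interposed sequence has \emph{exactly} the target magnitude $|\mathcal{F}h|+\epsilon$ while staying $O(\epsilon)$-close to $h$. Everything downstream is Parseval, one application of Cauchy--Schwarz, and the already-quoted minimum-phase energy inequality.
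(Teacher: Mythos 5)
Your proposal is correct and follows essentially the same route as the paper's proof: the auxiliary sequence $h_\epsilon$ you interpose (same phase as $h$, magnitude inflated to $|\mathcal{F}h|+\epsilon$) is exactly the paper's $\tilde{h}$, and both arguments then apply the minimum-phase energy-concentration inequality to the pair $(h_\epsilon, h_{\eMP})$ and transfer the estimate back to $h$ via the $O(\epsilon)$ closeness from Parseval. You are somewhat more explicit than the paper about the final chaining step (the Cauchy--Schwarz control of the cross term, and the resulting multiplicative constant on $\epsilon$) and about the phase convention at zeros of $\mathcal{F}h$, but these are refinements of the same argument rather than a different one.
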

\begin{proof}
Denote $\mathcal{F}h:=|\mathcal{F}h| e^{i \text{arg}\mathcal{F}h}$, where $\text{arg}$ means taking the continuous phase of $\mathcal{F}h$, and denote $\tilde{h}:=\mathcal{F}^{-1}((|\mathcal{F}h|+\epsilon) e^{i \text{arg}\mathcal{F}h})$. It is clear that $\|\tilde{h}-h\|_{L^2}\leq \sqrt{2\pi} \epsilon$ and $\|\tilde{h}-h\|_{L^\infty}\leq 2\pi \epsilon$. By definition, $h_{\eMP}=\mathcal{M} \tilde{h}$. Since $|\mathcal{F}\tilde{h}|\neq 0$, $h_{\eMP}$ is the unique MP companion of the window $\tilde{h}$. By (\ref{eq:mp_energy}) and the above relationship between $h$ and $\tilde{h}$, we conclude the claim.
\end{proof}
In the following, to simplify the discussion, we would drop the subscript $\epsilon$ and use $\mathcal{M}$ to denote the $\epsilon$-MP transform unless there is a danger of confusion. Next, we show that $h_{\MP}$ is a good candidate for the minimal latency TF analysis as its estimation time is closest to the observation time among all windows with the same magnitude spectrum.

\begin{thm}
Take two window functions $h_{\MP}(n)$, $h(n)\in\RR^N$, with $|\mathcal{F}h_{\MP}|=|\mathcal{F}h|$, and some zeros of the z-transform of $h$ are outside $\mathbb{T}$. Then, by definition in (\ref{eq:center}) and (\ref{eq: lat}), $t^{(h_{\MP})}_l< t^{(h)}_l$.
\label{thm3-1}
\end{thm}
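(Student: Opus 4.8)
The plan is to reduce the statement to a comparison of energy-weighted mean positions of the two windows and then feed in the minimum-phase energy-concentration inequality \eq{eq:mp_energy}. First I would record that the observation time $t^{(h)}_o=t+T/2$ is common to both windows, and that $|\mathcal{F}h_{\MP}|=|\mathcal{F}h|$ forces, by Parseval, the same total energy $E:=\sum_{n=0}^{N-1}|h(n)|^2=\sum_{n=0}^{N-1}|h_{\MP}(n)|^2$. In the reversed discrete indexing fixed in Section~\ref{sec:mlwin}, the index $n=0$ sits at the observation time, so the physical time obeys an affine relation $\tau=T/2-\delta n$ with $\delta>0$. Substituting this into the gravity-center form of the estimation time \eq{eq:center} gives $t^{(h)}_e=t+T/2-\delta\,\bar n^{(h)}$, where $\bar n^{(h)}:=\tfrac{1}{E}\sum_{n=0}^{N-1}n\,|h(n)|^2$ is the energy-weighted mean index; hence by \eq{eq: lat} we get $t^{(h)}_l=\delta\,\bar n^{(h)}$, a strictly increasing (indeed linear, positive-slope) function of $\bar n^{(h)}$. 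It therefore suffices to prove the strict inequality $\bar n^{(h_{\MP})}<\bar n^{(h)}$.

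Second, I would rewrite these means through the partial energies $W^{(h)}(k):=\sum_{n=0}^{k}|h(n)|^2$. A summation by parts yields the identity
\[
\sum_{n=0}^{N-1} n\,|h(n)|^2=(N-1)E-\sum_{k=0}^{N-2}W^{(h)}(k),
\]
so that $\bar n^{(h)}=(N-1)-\tfrac{1}{E}\sum_{k=0}^{N-2}W^{(h)}(k)$, and likewise for $h_{\MP}$. Subtracting the two expressions gives $\bar n^{(h)}-\bar n^{(h_{\MP})}=\tfrac{1}{E}\sum_{k=0}^{N-2}\bigl(W^{(h_{\MP})}(k)-W^{(h)}(k)\bigr)$. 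The minimum-phase energy concentration \eq{eq:mp_energy} guarantees $W^{(h_{\MP})}(k)\ge W^{(h)}(k)$ for every $k$, so each summand is non-negative and $\bar n^{(h_{\MP})}\le\bar n^{(h)}$, which already delivers $t^{(h_{\MP})}_l\le t^{(h)}_l$.

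The main obstacle is upgrading this to a strict inequality, and this is exactly where the hypothesis that $h$ has a zero strictly outside $\mathbb{T}$ enters. Since $h$ is then not a minimum-phase sequence while sharing the magnitude spectrum of $h_{\MP}$, we have $h\neq h_{\MP}$, and the ordering \eq{eq:mp_energy} is strict at some index $k_0$, i.e. $W^{(h_{\MP})}(k_0)>W^{(h)}(k_0)$. Because the two sequences carry equal total energy $E$, equality is forced at $k=N-1$, so necessarily $k_0\le N-2$ and the corresponding summand above is strictly positive; hence $\bar n^{(h_{\MP})}<\bar n^{(h)}$ and therefore $t^{(h_{\MP})}_l<t^{(h)}_l$. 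I expect the delicate point to be justifying this strict gap cleanly: the bare inequality \eq{eq:mp_energy} only gives $\le$, so one must invoke the strict form of the Oppenheim--Schafer partial-energy theorem (equivalently, that reflecting an exterior zero to its conjugate-reciprocal image genuinely moves $|h(\cdot)|^2$ mass toward $n=0$) in order to exclude the degenerate possibility that $|h_{\MP}(n)|=|h(n)|$ for all $n$, which would make the two mean indices coincide.
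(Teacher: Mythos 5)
Your proposal is correct and follows essentially the same route as the paper: reduce the claim to comparing energy-weighted mean indices, convert via summation by parts into a sum of partial-energy differences, and apply the minimum-phase energy-concentration property. The strictness issue you flag is resolved in the paper exactly as you anticipate, by the quantitative bound $\sum_{n=1}^{k}\bigl(|h_{\MP}(n)|^2-|h(n)|^2\bigr)\geq(1-1/|z_l|^2)|s_l(k)|^2$ attached to a zero $z_l$ outside $\mathbb{T}$, which makes the total weighted difference strictly positive.
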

\begin{proof}
The discrete-time version of (\ref{eq:center}) is represented as
\begin{equation}
t^{(h)}_e=\frac{\sum^{N}_{n=1}(N+1-n)|h(n)|^2}{\sum^{N}_{n=1}|h(n)|^2}\,.
\end{equation}
Proving $t^{(h_{\MP})}_l\leq t^{(h)}_l$ is equivalent to proving $t^{(h_{\eMP})}_e\geq t^{(h)}_e$, as the two windows are compared in accordance with the same observation time. Since $h_{\MP}$ and $h$ have the same magnitude spectrum, we have $\sum^{N}_{n=1}|h_{\MP}(n)|^2=\sum^{N}_{n=1}|h(n)|^2$. Therefore, it suffices to prove that
\begin{equation}
\sum^{N}_{n=1}(N+1-n)|h_{\MP}(n)|^2\geq \sum^{N}_{n=1}(N+1-n)|h(n)|^2\,. \nonumber
\end{equation}
To prove this, denote $z_l$, $l=1,\ldots,m$, the zeros of the z-transform of $h$, denoted as $H$, outside $\mathbb{T}$; that is $|z_l|>1$. Then for all $k=1,2,\cdots,N$ and a fixed $z_l$, the following energy concentration property is satisfied:
\begin{equation}
\sum^{k}_{n=1}|h_{\MP}(n)|^2-\sum^{k}_{n=1}|h(n)|^2\geq (1-1/|z_l|^2)||s_l(k)|^2, \nonumber
\end{equation}
where $s_l$ is another MP sequence whose z-transform, denoted as $S_l(z)$, satisfies $H(z)=S_l(z)(z_l-z)$.
Denote $q(n):=|h_{\MP}(n)|^2-|h(n)|^2$. Note that
\[
\sum^{k}_{n=1}q(n)\geq(1-1/|z_l|^2)||s_l(k)|^2> 0
\]
for all $k=1,2,\cdots,N$. Then, by a direct calculation we have
\begin{align}
&\sum^{N}_{n=1}(N+1-n)|h_{\eMP}(n)|^2 - \sum^{N}_{n=1}(N+1-n)|h(n)|^2 \nonumber\\
=&\, \sum^{N}_{n=1}(N+1) q(n) - \sum^{N}_{n=1}n q(n) \nonumber\\
=&\, (N+1)\sum^{N}_{n=1}q(n)-\left(\sum^{N}_{n=1}q(n)+\sum^{N}_{n=2}q(n)+\cdots+\sum^{N}_{n=N}q(n)\right) \nonumber\\
=&\, \left(\sum^{N}_{n=1}q(n)-\sum^{N}_{n=1}q(n)\right)+\left(\sum^{N}_{n=1}q(n)-\sum^{N}_{n=2}q(n)\right)+\cdots+\left(\sum^{N}_{n=1}q(n)-\sum^{N}_{n=N}q(n)\right) +\sum^{N}_{n=1}q(n) \nonumber\\
=&\, 0+\sum^{1}_{n=1}q(n)+\sum^{2}_{n=1}q(n)+\cdots+\sum^{N-1}_{n=1}q(n)+\sum^{N}_{n=1}q(n) \nonumber\\
\geq&\,(1-1/|z_l|^2)|\sum^{N}_{k=1}|s_l(k)|^2>0 \nonumber\,,
\end{align}
which shows the claim.
\end{proof}

Note that the conclusion $t^{(h_{\MP})}_l< t^{(h)}_l$ indicates that the MP window will reduce the intrinsic latency for TF analysis when there is at least one root outsize $\mathbb{T}$. In sum, we have $t^{(h)}_l=(N-1)\tau/2$ and $t^{(h^{\circ})}_l< (N-1)\tau/2$, where $\tau$ is the sampling period in the digitization of a given signal.

\begin{figure}[]
    \centering
    \includegraphics[width=\columnwidth]{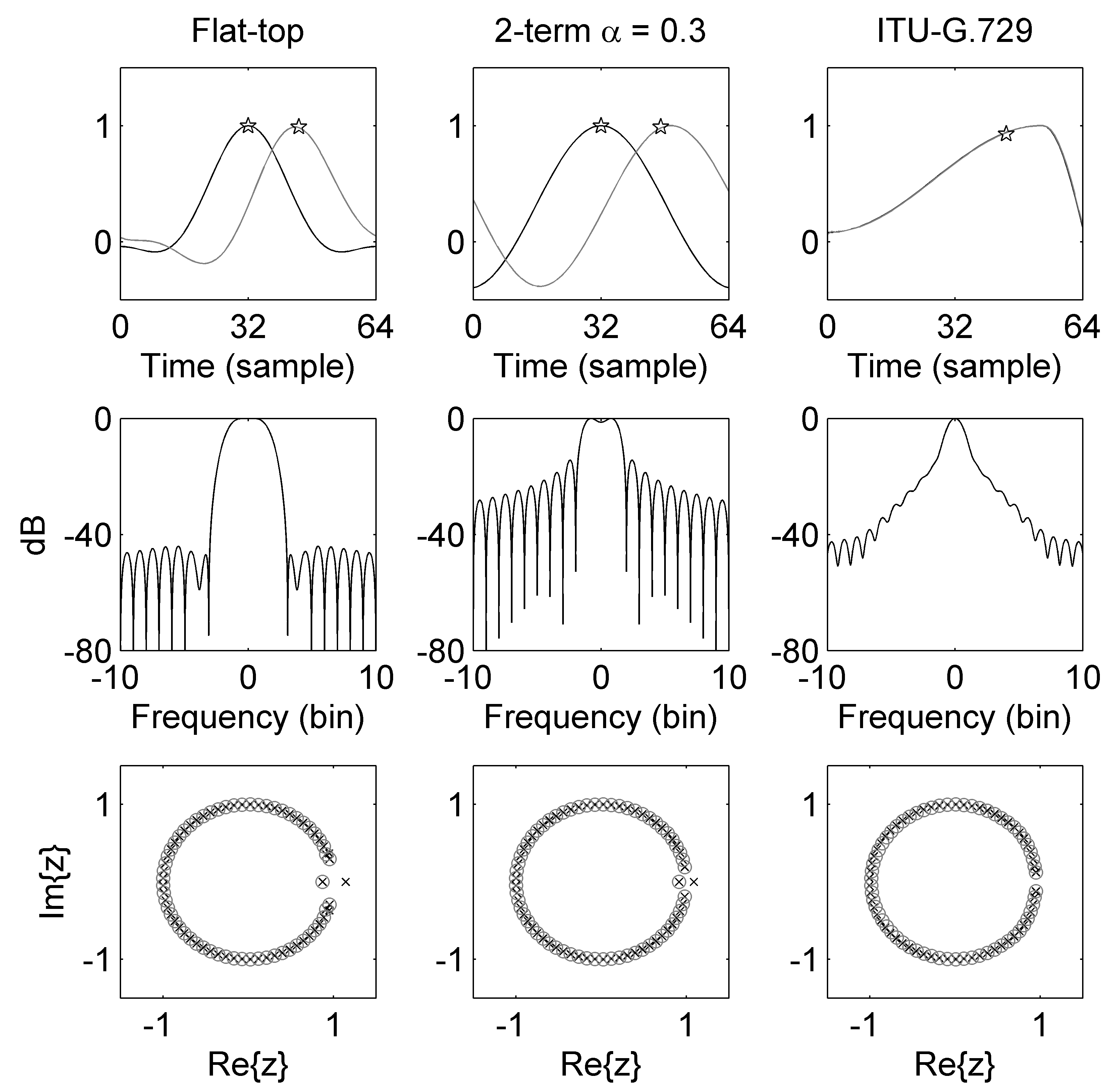}
    \caption{Illustration of the asymmetric windows (top), their dB-scaled magnitude spectra (middle) and zeros (bottom). From left to right: flat-top window (the black line is before and the gray line is after the MP transform), the 2-term cosine series window with $\alpha_0=0.3$ (the black line is before and the gray line is after the MP transform), and the ITU-T G.729 window. These windows are in $\RR^{65}$. Asterisk: the position of the estimation time. In the bottom figure, zeros of the window before the MP transform are marked as crosses and zeros of the window after the MP transform are marked as circles. Notice that these windows do not have poles.}
    \label{fig:mp_demo}
\end{figure}

\subsection{Design of minimal latency window}

While there are various ways in designing a symmetric window for different purposes \cite{Harris:1978}, motivated by Theorem \ref{thm3-1}, we could apply the $\epsilon$-MP transform to a well designed symmetric window $h$ and expect to obtain an asymmetric window $h^\circ$ fulfilling our needs.
However, there are several questions regarding this approach, where the most important one might be: given a chosen symmetric window $h$ with desired properties, how to confirm whether all the zeros of its z-transform are located on $\mathbb{T}$? Note that when all the zeros are located on $\mathbb{T}$, essentially the $\epsilon$-MP transform does not change the symmetry property of $h$, and hence the intrinsic latency cannot be reduced.

This question is equivalent to asking whether all the zeros of a {\em palindromic polynomial} are located on $\mathbb{T}$. To answer this question, one approach is simply applying root finding algorithm to find all roots with enough good accuracy, and checking whether the norm of all roots are close enough to one. The other approach is, more precisely, establishing an exact testing procedure by introducing rigorous analysis on the sufficient and necessary condition for a palindromic polynomial to have all its zeros on $\mathbb{T}$ \cite{Suzuki:2016}. Our pilot study has shown that both approaches work efficiently in practice. Since the first method is rather straightforward, in the followings we will only introduce the second approach in detail.


Recall that a nonzero polynomial $P(x) = c_0x^n + c_1x^{n-1} + \ldots + c_{n-1}x + c_n$ with real coefficients is a \textit{palindromic polynomial} (respectively \textit{anti-palindromic} polynomial) of degree $n$ if $c_0\neq  0$ and $c_k = c_{n-k}$ (respectively $c_k=-c_{n-k}$) for every $0\leq k \leq n$. Note that the z-transform of a real and symmetric (respectively anti-symmetric) window $h=[h(0),\ldots,h(N)]\in \RR^{N+1}$, where $h(0)\neq 0$, is a palindromic (anti-palindromic) polynomial of degree $N$. In this study, we focus on the symmetric windows.
There are several good properties of palindromic and anti-palindromic polynomials.
\begin{enumerate}
\item[(P1)] All roots of a palindromic or an anti-palindromic polynomial appear pairwisely; that is, if $\lambda\in\CC$ is a root, then $1/\lambda$ is also a root. The converse is also true.
\item[(P2)] If a palindromic (respectively anti-palindromic) $P$ is of odd degree, we could find a palindromic polynomial $\tilde{P}$ of even degree so that $P(x)=(x+1)\tilde{P}(x)$ (respectively $P(x)=(x-1)\tilde{P}(x)$); if an anti-palindromic $P$ is of even degree, we could find a palindromic polynomial $\tilde{P}$ of even degree so that $P(x)=(x^2-1)\tilde{P}(x)$.
\end{enumerate}

By (P1), we know that if a window's z-transform has a non-unitary root, then the $\epsilon$-MP transform could help to convert a chosen symmetric window to an asymmetric but not anti-symmetric window.  By (P2), the question is reduced to ask when a palindromic polynomial of even degree has all its zeros located on $\mathbb{T}$.
While there are several if and only if conditions for this question, to the best of our knowledge, they either depend on the knowledge of finding the roots of the derivative of the palindromic polynomial \cite[Section 7.5]{Suzuki:2016} or decomposing the polynomial into a product of a sequence of quadratic polynomials. These conditions are however not directly calculable for the practical purpose. We thus consider the following condition based solely only on reading the coefficients of a palindromic polynomial.

We need to introduce the following quantities. The first set of notations are for matrices.
For $n,m\in\NN$, denote $0_{n\times m}$ to be the zero matrix of size $n\times m$, $e_j\in\RR^{n}$ to be the unit vector with the $j$-th entry $1$, $I_n$ to be the $n\times n$ identity matrix, and $\tilde{I}_n$ to be the anti-diagonal matrix of size $n\times n$ with all anti-diagonal entries $1$.
For $k\in\NN$, define
\begin{align}
&J^{(1)}_k:=[I_k| 0_{k\times 1}]\in \RR^{k\times(k+1)},\nonumber\\
&J^{(2)}_k:=[0_{k\times 1}| I_k]\in \RR^{k\times(k+1)},\nonumber\\
&J^{(3)}_k:=\left[\begin{array}{cc}0_{1\times k} \\ \hline \tilde{I}_k  \end{array}\right]\in\RR^{(k+1)\times k}\nonumber.
\end{align}
Then, for an indeterminate $x\in\RR$, define the following matrices
\begin{align}
&P_0:=I_2,\, Q_0:=\begin{bmatrix} 1 & 1 & 0 & 0\\ 0 & 0 & 1 & -1\end{bmatrix}\nonumber\\
&P_1(x):=\begin{bmatrix}1 & 0 & 0 & 0\\ 0 & 1 & 0 & 0 \\ 0 & 0 & 1 & 0\\ 0 & 1 & 0 & -x \end{bmatrix},\, Q_1:=\begin{bmatrix} 1 & 0 & 1 & 0 & 0 & 0\\0 & 1 & 0 & 0 & 0 & 0\\0 & 0 & 0 & 1 & 0 & -1 \\ 0 & 0 & 0 & 0 & 0 & 0
\end{bmatrix};\nonumber
\end{align}
For $k\geq 2$, $P_k(x)$ and $Q_k$ are defined blockwisely as
\begin{align}
&P_k(x):=\begin{bmatrix} V_k^+ & 0\\ 0 & V_k^- \\ J^{(1)}_k & -xJ^{(2)}_k\end{bmatrix},\,\,Q_k:=\begin{bmatrix} W_k^+ & 0\\ 0 & W_k^- \\ 0_{k\times (k+2)} & 0_{k\times (k+2)}\end{bmatrix},\nonumber
\end{align}
where
\begin{align}
V_k^+:=\left\{\begin{array}{ll}
\big[ {J^{(1)}}^T_{\frac{k+1}{2}} \big| J_{\frac{k+1}{2}}^{(3)}
\big]&\mbox{ when }k\mbox{ is odd}\\
\big[ I_{\frac{k}{2}+1}  \big| J_{\frac{k}{2}}^{(3)}
\big]&\mbox{ when }k\mbox{ is even},
\end{array}
\right. \nonumber
\end{align}
\begin{align}
V_k^-:=\left\{\begin{array}{ll}
\big[ I_{\frac{k+1}{2}} \big| 0_{\frac{k+1}{2}\times 1} \big| -J_{\frac{k-1}{2}}^{(3)}
\big]&\mbox{when }k\mbox{ is odd}\\
\big[ I_{\frac{k}{2}+1} \big|  -J_{\frac{k}{2}}^{(3)}
\big]&\mbox{when }k\mbox{ is even},
\end{array}\right.\nonumber
\end{align}
\begin{equation}
W_k^+=[V_k^+|e_1]\mbox{ and }W_k^-=[V_k^-|-e_1];
\end{equation}
that is, $W_k^{\pm}$ comes from adding $\pm e_1$ on the right hand side of $V_k^{\pm}$.
The second set of notations are used to check the locations of the roots. Take $K\in\NN$. For $z\in \RR^{4K+2}$ indeterminate, define
\begin{equation}
v^{(K)}_0(z)=z\in\RR^{4K+2}.
\end{equation}
We then recursively define for $n=1,2,\ldots$
\begin{equation}
m_{2K-n}(z):= \frac{e_1^Tv^{(K)}_{n-1}(z)+e_{2K-n+2}^Tv^{(K)}_{n-1}(z)}{e_{2K-n+3}^Tv^{(K)}_{n-1}(z)+e_{4K-2n+4}^Tv^{(K)}_{n-1}(z)}
\end{equation}
and hence
\begin{equation}
v^{(K)}_n(z):=P_{2K-n}(m_{2K-n}(z))^{-1}Q_{2K-n}v^{(K)}_{n-1}(z).
\end{equation}

With the above notations, we could state the following theorem:
\begin{thm}\label{TheoremSuzuki:2016}
\cite[Theorem 1.7]{Suzuki:2016}
Let $K \geq 1$ and $q > 1$. Let $P(x)=\sum_{l=0}^{2K}h(l)x^l$ be a palindromic polynomial of degree $2K$ with real coefficients. Take $z=[a^T\,\, b^T]^T\in\RR^{4K+2}$, where $$
a:=\begin{bmatrix}h(0)\\h(1)\\\vdots\\h(K-1)\\h(K)\\h(K+1)\\\vdots\\h(2K-1)\\h(2K)\end{bmatrix},\,\,b:=\begin{bmatrix}h(0)\log q^K\\h(1)\log q^{K-1}\\\vdots\\h(K-1)\log q\\0\\-h(K+1)\log q\\\vdots\\-h(2K-1)\log q^{K-1}\\-h(2K)\log q^K\end{bmatrix}\in\RR^{2K+1}
$$
Then, all roots of $P(x)$ are located on $\mathbb{T}$ if and only if
$0<m_{2K-n}(z)<\infty$ for every $1\leq n \leq 2K$;
\end{thm}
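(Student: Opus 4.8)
The plan is to reduce the question about zeros of the palindromic polynomial $P$ on $\mathbb{T}$ to a question about real zeros of an auxiliary polynomial of half the degree, and then to recognize the recursion defining $m_{2K-n}(z)$ as a disguised continued-fraction (Euclidean) expansion whose partial quotients are exactly the $m_{2K-n}$.

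First I would exploit the palindromic structure through the reciprocal (Chebyshev) substitution $u = x + x^{-1}$. Since $h(l) = h(2K-l)$, one may write $P(x) = x^{K} p(x + x^{-1})$ for a unique real polynomial $p$ of degree $K$, whose coefficients are recovered linearly from the entries of $a$. A zero $x = e^{i\theta}\in\mathbb{T}$ of $P$ corresponds to a real zero $u = 2\cos\theta\in[-2,2]$ of $p$, and conversely every real zero of $p$ in $[-2,2]$ lifts to a conjugate pair on $\mathbb{T}$ by (P1). Hence all zeros of $P$ lie on $\mathbb{T}$ if and only if $p$ has all $K$ of its zeros real and located in $[-2,2]$.

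Next I would identify the role of the vector $b$. A direct computation gives $\sum_{l=0}^{2K}(K-l)h(l)x^{l} = KP(x) - xP'(x)$, and substituting $P(x) = x^{K}p(x+x^{-1})$ yields $KP(x) - xP'(x) = x^{K-1}(1-x^{2})\,p'(x+x^{-1})$. Thus, up to the positive factor $\log q$ (this is the only place $q>1$ enters, guaranteeing $\log q>0$ so that the sign pattern is insensitive to the choice of $q$), the entries of $b$ encode the derivative-related companion of $p$; on $\mathbb{T}$ the data $(a,b)$ reproduces the real pair $\big(p(2\cos\theta),\,\sin\theta\,p'(2\cos\theta)\big)$, namely the real and imaginary parts of a single analytic object built from $P(e^{i\theta})$. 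The requirement that $p$ have only real zeros in $[-2,2]$ is then equivalent to a Hermite--Biehler / Sturm interlacing condition between $p$ and this companion, which in turn is equivalent to the positivity of every partial quotient in the Stieltjes continued-fraction expansion of their ratio.

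The final and most technical step would be to verify that the block matrices $P_k(x),Q_k$ implement precisely one step of this continued-fraction (polynomial-division) algorithm on the Chebyshev-transformed data: that $v^{(K)}_n(z) = P_{2K-n}(m_{2K-n})^{-1}Q_{2K-n}v^{(K)}_{n-1}(z)$ carries one remainder pair to the next, that $m_{2K-n}(z)$ is the corresponding partial quotient, and that the degree bookkeeping encoded in $V_k^{\pm}, W_k^{\pm}$ and the $J^{(i)}_k$ matches the drop in degree at each division. Granting this, positivity of all $m_{2K-n}$ is exactly the statement that every partial quotient is positive (strict interlacing), while their finiteness guarantees that no denominator $e_{2K-n+3}^{T}v^{(K)}_{n-1} + e_{4K-2n+4}^{T}v^{(K)}_{n-1}$ vanishes, i.e.\ that the algorithm runs to completion without the degeneracy signalling a repeated zero or a zero at the endpoints $u=\pm2$ (equivalently a zero of $P$ at $x=\pm1$, which is peeled off separately through (P2) in the odd-degree reduction). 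Assembling these equivalences yields the theorem.

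I expect the main obstacle to be precisely this last identification: showing that the particular block-triangular recursion is a faithful encoding of the Euclidean/continued-fraction step --- matching the index shifts, the alternation of the $\pm$ blocks, and the inverse $P_{2K-n}(m_{2K-n})^{-1}$ --- and that positivity together with finiteness of the scalar quotients $m_{2K-n}$ captures strict interlacing with no boundary degeneracy. The reductions in the first two paragraphs are classical; the linear-algebraic verification that the stated recursion computes the right quantities, together with a careful treatment of the boundary and multiplicity edge cases, is where the real work lies.
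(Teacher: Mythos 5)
The paper does not actually prove this statement: it is imported verbatim from Suzuki's work (cited as Theorem~1.7 there), and the only remark made is that the proof ``is based on studying an inverse problem for a class of canonical systems.'' So there is no in-paper argument to match yours against; the relevant question is whether your sketch stands on its own, and it does not. Your first two paragraphs are correct but entirely classical: the factorization $P(x)=x^{K}p(x+x^{-1})$, the identity $KP(x)-xP'(x)=x^{K-1}(1-x^{2})\,p'(x+x^{-1})$, and the observation that $b$ is $(K-l)h(l)\log q$ are all fine, and the equivalence ``all zeros of $P$ on $\mathbb{T}$ iff all zeros of $p$ real in $[-2,2]$ iff a Hermite--Biehler/Sturm positivity condition'' is standard. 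But none of this touches the actual content of the theorem, which is the specific claim that the quantities $m_{2K-n}(z)$ produced by the block recursion with $P_k$, $Q_k$, $V_k^{\pm}$, $W_k^{\pm}$ are exactly the partial quotients of such an expansion. You explicitly defer that identification (``this last identification \dots is where the real work lies''), so the proposal is a plan, not a proof: the one step that distinguishes this theorem from textbook material is the step you have not carried out.

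Two further cautions. First, your guess that the recursion is a disguised Euclidean/continued-fraction algorithm is plausible but unverified; Suzuki's actual route is through an inverse spectral problem for canonical systems (the $m_k$ arise as data of a Hamiltonian), which is related to Stieltjes theory but is not the same as a bare Sturm-sequence computation, and there is no guarantee the index bookkeeping in $J^{(1)}_k,J^{(2)}_k,J^{(3)}_k$ lines up with the division steps you have in mind without a genuine calculation. Second, your treatment of the boundary and multiplicity cases is hand-waved: (P2) as stated in the paper peels off factors from odd-degree or anti-palindromic polynomials, whereas here $P$ has even degree $2K$ and zeros at $x=\pm 1$ occur with even multiplicity; whether the condition $0<m_{2K-n}(z)<\infty$ tolerates or excludes such zeros (and repeated zeros on $\mathbb{T}$ generally) is precisely the kind of edge case that must be settled to establish the stated ``if and only if,'' and your sketch does not settle it.
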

The proof is based on studying an inverse problem for a class of canonical systems and can be found in \cite{Suzuki:2016}.
Based on the above discussion, we propose the following systematic way of designing asymmetric windows with minimal latency instead of relying on heuristics.
\begin{enumerate}
\item Select a well-designed real and symmetric window $h$ with the desired spectral properties;
\item Confirm if all roots are on $\mathbb{T}$ by utilizing Theorem \ref{TheoremSuzuki:2016};
\item If roots are not all on $\mathbb{T}$, apply the $\epsilon$-MP transform to convert $h$ to a new window $h^\circ$.
\end{enumerate}
In this case, based on Theorem \ref{thm3-1}, we know that $h^\circ$ has the intrinsic latency for TF analysis smaller than that of $h$, and it has the same spectrum information as that of $h$, including the main-lobe width and first side lobe level, etc., all of which are known from the original symmetric window.

\subsection{Example: short cosine series windows with MP transform}
\label{sec:cosinewin}

\subsubsection{Review of short cosine series windows}
To illustrate how the window design scheme works, we consider the \emph{short cosine series windows} as an example, which is arguably the most thoroughly used type of windows in TF analysis. The general form of a \emph{K}-term cosine series window with length of $N$ is
\begin{equation}\label{Definition:CosineWindow}
h^{[\alpha_0,\alpha_1,\ldots,\alpha_{K-1}]}\left(n\right)=\sum^{K-1}_{k=0} (-1)^{k}\alpha_k c_k(n),
\end{equation}
where $n=0,1,\cdots,N-1$, $c_k(n):=\cos(2\pi kn/N)$, and $\alpha_k>0$ so that $\sum^{K-1}_{k=0} \alpha_k = 1$ is satisfied. Obviously, all of these windows are symmetric.
Well known examples include the \emph{2-term cosine series windows} like Hamming window and Hann window; that is, $K=2$ in (\ref{Definition:CosineWindow}),
$\alpha_0=0.54$ for the Hamming window and $\alpha_0=0.50$ for the Hann window, and \emph{3-term cosine series windows}, that is, $K=3$, $\left[\alpha_0, \alpha_1, \alpha_2\right]=\left[0.42, 0.50, 0.08\right]$ for the Blackman window, and $\left[\alpha_0,\alpha_1,\alpha_2\right]=\left[0.28, 0.52, 0.20\right]$ for the flat-top window. The characteristics of these windows can be found in \cite{Harris:1978,Nuttall:1981,Rozman_Kodek:2007}.

Note that by Theorem \ref{TheoremSuzuki:2016}, the zeros of the z-transform of the Hamming, Hann and Blackman windows are all on the unit circle, so none of these windows can be transformed to an asymmetric by the MP transform.
On the other hand, by Theorem \ref{TheoremSuzuki:2016}, the flat-top window and the 2-term cosine windows with $\alpha_0<0.50$ have at least one zero outside the unit circle, and therefore can be transformed into asymmetric windows with the smallest intrinsic latency guarantee by Theorem \ref{thm3-1}. This can be clearly seen in the first two columns of Fig. \ref{fig:mp_demo}, where the asymmetric window (gray line) has the same magnitude spectrum as the symmetric window, but all zeros of the symmetric window are transformed into the unit circle.
We refer to this window as the {\em MP flat-top window} in the rest of this paper.

It is also worth mentioning the ITU-T G.729 standard, a VoIP (Voice over IP) standard now widely used in Internet telephony like Skype, Google Talk, etc. An important feature is that it offers a low latency data transfer. The standard envisaged to reduce the intrinsic latency from 15 ms to around 5 ms when the window size is 30 ms \cite{ITUG729}. The window used in ITU-T G.729 standard, called the hybrid hamming-cosine window, is represented as
\begin{equation}
h_{I}(n) =
\begin{cases}
\cos\left(\frac{2\pi n}{2N/3-1}\right), & 0 \leq n \leq \frac{N}{6}-1, \\
0.54-0.46\cos\left(\frac{2\pi(n-N/6)}{5N/3-1}\right), & \frac{N}{6} < n \leq N-1. \\
\end{cases}
\end{equation}
The rightmost column of Fig. \ref{fig:mp_demo} illustrates the window. The ITU-T G.729 window has an advantage for its main-lobe being narrower than others; its 3-dB bandwidth is only 1.2 bins, while the 3-dB bandwidth of the flat-top window is 2.9 bins. Its highest side-lobe level is at -29.62 dB. Intriguingly, the zeros of the ITU-T G.729 window are very close to (although not on) the unit circle; a MP transform converts the ITU-T G.729 window to another one which is quite similar to itself. Although not mentioned explicitly in the literature, this fact indicates that the ITU-T G.729 window not only has adequate temporal continuity, bandwidth and side-lobe level, but also behaves like an MP window.

\subsubsection{Optimized 2-term cosine window}
Given the model of a $K$-term cosine window, the intrinsic latency could be optimized by choosing the optimal coefficient. Consider $K=2$. By Theorem \ref{TheoremSuzuki:2016} we found that when $\alpha_0\geq 0.5$, all zeros of the z-transform of the window are on the unit circle, so the intrinsic latency lower than $N/2$ cannot be constructed through the MP transform. On the other hand, since an asymmetric window with the same spectral magnitude can be constructed only when $\alpha_0< 0.5$, to achieve the minimal latency, we consider the optimization problem:
\begin{equation}
\alpha^{(opt)}_0=\argmin_{\alpha_0\in(0,0.5)}t^{(h^{[\alpha_0\,\,\, (1-\alpha_0)]}_{{\eMP}})}_l\,.
\end{equation}
By sweeping over $0<\alpha_0\leq 0.5$, we found that when $\alpha_0=0.30$ we have a minimal intrinsic latency $t^{(h^{[0.3\,\, 0.7]}_{{\eMP}})}_l=0.285 N$. This MP-transformed window has the first side-lobe level of only -14.37 dB, but all the zeros are within the unit circle, as illustrated in the third column of Fig. \ref{fig:mp_demo}.

Table \ref{tab: comp} summarizes the observation time ($t^{(h)}_o$), estimation time ($t^{(h)}_e$) and intrinsic latency ($t^{(h)}_l$) of several asymmetric windows associated with $t=0$, all of which are computed by Equations (\ref{eq:center}) and (\ref{eq: lat}). The intrinsic latencies of these asymmetric windows are reduced from $N/2$ (symmetric case) to less than $N/3$.

\begin{table}
\centering
\caption{Observation time, estimation time and intrinsic latency of selected windows in $\RR^N$ and centered at $n=0$.}
\begin{tabular}{|l|c|c|c|}
\hline
$h$ & $t^{(h)}_o$ & $t^{(h)}_e$ & $t^{(h)}_l$ \\
\hline
\hline
All symmetric windows & $0.500N\tau$ & 0 & $0.500N\tau$\\
MP Flat-top & $0.500N\tau$ & $0.169N\tau$ & $0.331N\tau$ \\
MP 2-term, $\alpha=0.30$ & $0.500N\tau$ & $0.215N\tau$ & $0.285N\tau$ \\
ITU-T G.729 & $0.500N\tau$ & $0.179N\tau$ & $0.321N\tau$ \\
\hline
\end{tabular}
\label{tab: comp}
\end{table}

\section{Application on Real-time Musical Onset Detection}
\label{sec:application}

\subsection{Introduction}

Musical onset detection is the task of finding the beginning timestamp of every meaningful music event (i.e., notes, drum hits, etc.) in a given music excerpt \cite{bello2005tutorial,dixon2006onset,holzapfel2010three}. Real-time onset detection is a fundamental building block in interactive music systems, with application to musicology research, music performance, education, entertainment and others \cite{mcfee2014better,robertson2007b, benetos2013automatic,von2014cmmsd,lartillot2013more}.
Real-time onset detection requires light computation and low latency, and the latter means that the delay between input and output of the system should be minimized \cite{bock2012evaluating,bock2012online,glover2011real}. The required latency varies in application. An accepted latency is generally less than 10 ms, which is challenging because of the TFR distortion caused by the strong effect of spectral leakage induced by the short window function  \cite{brandt1998low,annett2014low,chaudhary2001perceptual,pardue2014lowlatency}.

The {\em spectral flux} (SF) is arguably the most widely-used onset detection method. It nicely measures the total incremental amount of the TFR magnitude (e.g., the spectrogram) along every consecutive time \cite{bock2013maximum, su2014power}. Onset events are determined by the peaks of the curve representing the SF at each time; see (\ref{eq:PSSF}) for the definition.
Besides SF, recently more advanced features are introduced, like multi-resolution STFT \cite{bock2012online}, wavelet transform \cite{marchi2014multi}, correntropy \cite{chang2014pairwise}, harmonic cepstrum \cite{heo2013note}, constrained linear reconstruction \cite{liang2015musical}, and supervised method like neural networks \cite{eyben2010universal,marchi2014multi,bock2012online}, to name but a few.
In this study, we opt for the well-studied SF method in the experiment, as this is our first attempt at investigating the effect of asymmetric windows for real-time onset detection.

\subsection{Algorithms}

An onset detection system contains mainly three parts: feature extraction,  onset detection function (ODF) computation and onset decision. We consider two feature representations, the STFT ($V^{(h)}_x$) and the SST ($S^{(h)}_x$). The ODF computation and onset decision stages are described as follows.

\subsubsection{Onset detection function}

We use the power-scaled SF as the ODF algorithm \cite{su2014power}. Fix the power $p\geq0$ and introduce a parameter $\mu\geq1$ to control the number of frames apart where the difference is computed. The SF is combined with a maximum filtering process \cite{bock2013maximum} with the parameter $\eta\geq0$:
\begin{equation}
\label{eq:PSSF}
\mathrm{SF}(n)=\sum^{N-1}_{k=0}H\left(|X(n,k)|^p-\max_{k-\eta \leq k' \leq k+\eta}|X(n-\mu,k')|^p \right )\,.
\end{equation}
where $X$ is the chosen TFR, $N$ is the size of frequency axis grid of STFT, $n$ and $k$ are the time and frequency indices respectively, and $H(x):=(x+|x|)/2$, $x\in\RR$, is the half-wave rectifier function. The {\em maximum filtering} process is useful in suppressing unwanted peaks, especially when the frequency contours in the signal have some slight variation; the most evident example is the \emph{vibrato} \cite{bock2013maximum}. In this work, we set $p = 0.5$, $\eta = 1$ and $\mu = 3$ to compute the ODF.

\subsubsection{Peak picking}

After obtaining the refined ODF, we refer to a real-time peak picking process to find the onset \cite{bock2013maximum}. The onset is determined by the following conditions:
\begin{align}
\mathrm{SF}\left(n\right) =&\, \max_{m=n-30\ \mathrm{ms}\ldots,n}\left(\mathrm{SF}\left(m\right)\right)\nonumber \\
\mathrm{SF}\left(n\right) \geq&\, \mathrm{mean} \left(\mathrm{SF}\left(n-150\ \mathrm{ms}\right),\ldots,\mathrm{SF}\left(n\right)\right)+\delta\,,
\end{align}
where the threshold $\delta$ determines the minimal salience in which a peak is regarded as an onset event. In our experiment we set $\delta=0.15$.

\subsection{Evaluation}

We evaluate our methods on a subset of the MIDI Aligned Piano Sounds (MAPS) database \cite{Emiya08thesis, lee12tmm}. The subset contains 30 piano pieces recorded by an upright Yamaha Disklavier piano. The annotation data of MAPS include the onset of every note. There are more than 10,000 onsets in this dataset.

Following previous work, we evaluate the performance by the F-score. The F-score is defined as $F := 2PR/(P+R)$, with Precision, \emph{P}, and Recall, \emph{R}, being computed from the number of correctly detected onsets $N_{tp}$, the number of false alarms $N_{fp}$, and the number of missed onsets $N_{fn}$, where $P:=N_{tp}/(N_{tp}+N_{fp})$ and $R:=N_{tp}/(N_{tp}+N_{fn})$. A detected onset is correct when it is located within a tolerance of $\sigma=\pm50$ ms around the ground truth annotation.\footnote{Note that the tolerance $\sigma$ is greater than the intrinsic latency in most cases. However, this is necessary because there are inevitable deviations (e.g., perceptual difference among the annotators, deviation between mechanical strikes and sound generations, etc.) of the ground truth, and such deviations prevent an onset detection algorithm from being arbitrarily accurate (more detailed discussion on the deviation of music annotation can be seen in \cite{su2015escaping}). Thus, given that the onset labels are distributions around the true onsets, using a large tolerance window is considered a meaningful evaluation method.} Detailed information of evaluation could be found in \cite{holzapfel2010three}.

\begin{figure}
\centering
\includegraphics[width=\columnwidth]{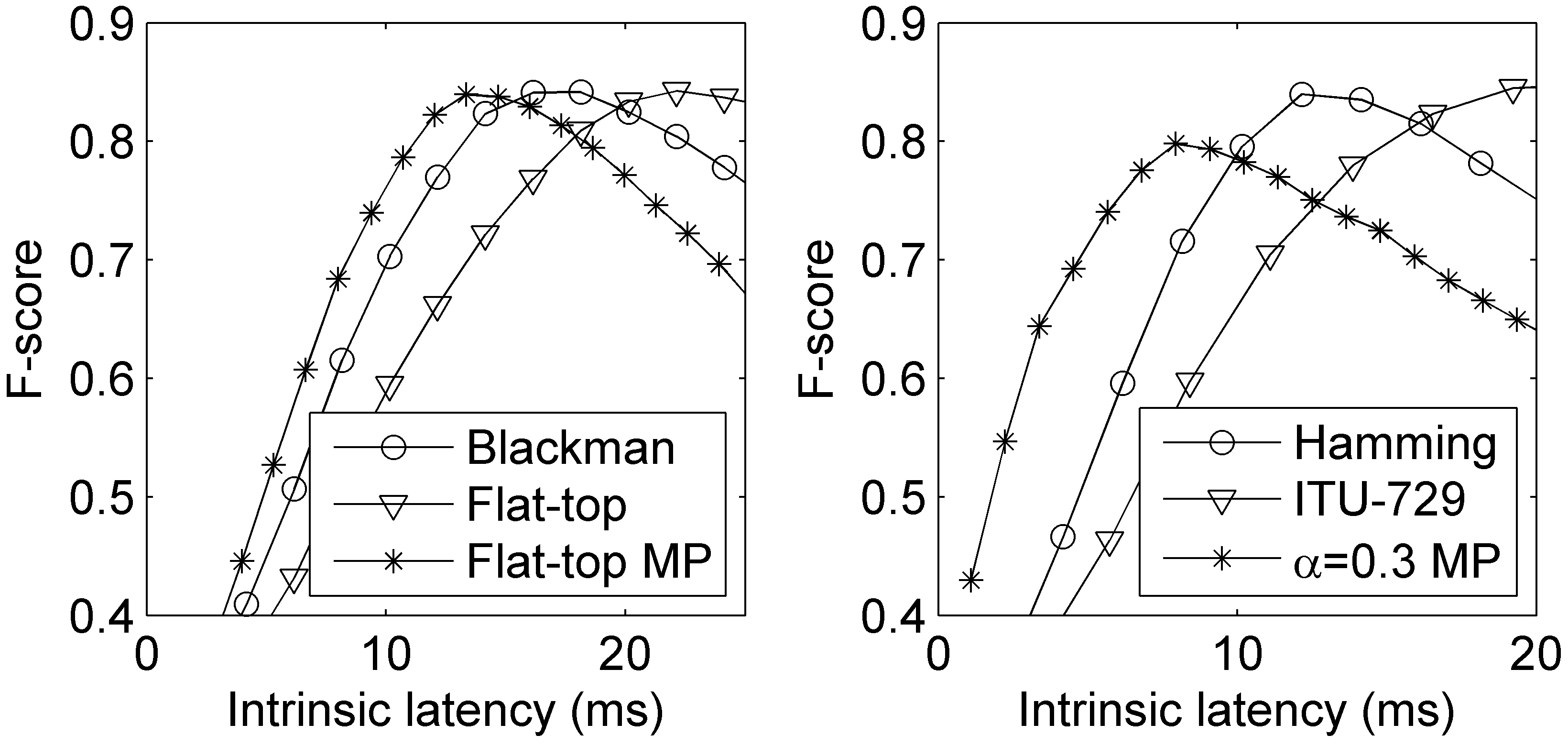}
\caption{Average F-scores of the MAPS dataset using various types of window functions with various intrinsic latencies. Left: Blackman ($h_B$), flat-top ($h_F$), and MP flat-top ($h_{F,\MP}$) windows. Right: Hamming ($h_{0.54}$), $\alpha=0.3$ ($h_{0.30}$), MP $\alpha=0.3$ ($h_{0.30,\MP}$), and ITU-T G.729 ($h_I$) windows.}
\label{fig:maps_stft}
\end{figure}

\subsection{Experiment settings}

All signals are downsampled to 5,512.5 Hz. For all TFRs, we use the same hop size of 2.9 ms (16 samples) and varying window sizes from 5 to 100 ms. When computing FFT we use zero-padding such that the frequency resolution in the TFR is 10 Hz. We consider the following window functions:

\begin{itemize}
\item 3-term cosine series windows: the Blackman ($h_B$), flat-top ($h_F$), and MP flat-top ($h_{F,\MP}$) windows.
\item 2-term cosine series windows: the Hamming ($h_{0.54}$), $\alpha=0.3$ ($h_{0.30}$), and MP $\alpha=0.3$ ($h_{0.30,\MP}$) windows.
\item The ITU-T G.729 ($h_I$) window.
\end{itemize}

We are interested in knowing which window achieves the highest F-score among all window sizes and, more importantly, which window function performs better when the desired intrinsic latency is extremely low (e.g., $t^{h}_{l}<10$ ms).

\subsection{Experiment result of minimal latency windows}

The left diagram of Fig. \ref{fig:maps_stft} shows the result of the 3-term cosine windows ($h_B$, $h_F$, and $h_{F,\MP}$). The three window functions have similar optimal F-scores, but their corresponding intrinsic latencies are quite different. First, by comparing $h_B$ and $h_F$, we found that $h_B$ outperforms $h_F$ for $t^{h}_{l}<20$ ms. More specifically, $h_B$ achieves the optimal F-score (84.15\%) at $t^{h}_{l}=17.96$ ms, while $h_F$ achieves its optimum (84.25\%) at a longer latency $t^{h}_{l}=21.95$ ms. Since the two windows have the same intrinsic latency for the same $N$, such a difference should be caused by the magnitude spectra. In fact, $h_B$ has less spectral leakage than $h_F$ due to its better side-lobe rejection, so $h_F$ requires longer size in order to reduce its spectral leakage effect. It is to say, $h_B$ benefits from its magnitude spectrum rather than its intrinsic latency.

However, when considering the MP window, we found that $h_{F,\MP}$ outperforms $h_B$ for $t^{h}_{l}<15$ ms by taking the advantage of its low intrinsic latency. At $t^{h}_{l}=13.18$ ms, $h_{F,\MP}$ achieves its optimum of 83.76\%, slightly lower than the optimum of $h_F$. In comparison to $h_F$, the intrinsic latency corresponding to the optimal F-score of $h_{F,\MP}$ is reduced by almost 40\% (i.e., from 21.95 ms to 13.18 ms). In brief, there is a ``time-compressing'' behavior for the MP window: for the same window size, the intrinsic latency of the MP window is reduced while the F-score varies little.

The right diagram of Fig. \ref{fig:maps_stft} shows the result of the 2-term cosine windows ($h_{0.54}$, $h_{0.30}$, and $h_{0.30,\MP}$) and the ITU-T G.729 window ($h_I$).
As expected, the window with smaller intrinsic latency performs better than the others when the intrinsic latency is low: $h_I$ performs better than $h_{0.54}$ for $t^{h}_{l}<12$ ms, and $h_{0.30,\MP}$, with smallest intrinsic latency of $0.28N$, further outperforms $h_I$ for $t^{h}_{l}<5.5$ ms.
However, the globally optimal F-score of $h_{0.30,\MP}$ is only 79.8\% due to its high side-lobe level and the accompanying spectral leakage.

We summarize that an asymmetric MP window would outperform other symmetric or non-MP windows when the desired intrinsic latency of the system is low. In contrast, when looking at a longer latency, the effect of the spectrum should be taken into account. The MP transform provides a way of designing minimal latency window from the symmetric window while keeping spectral information and the system performance.

\begin{figure}
\centering
\includegraphics[width=\columnwidth]{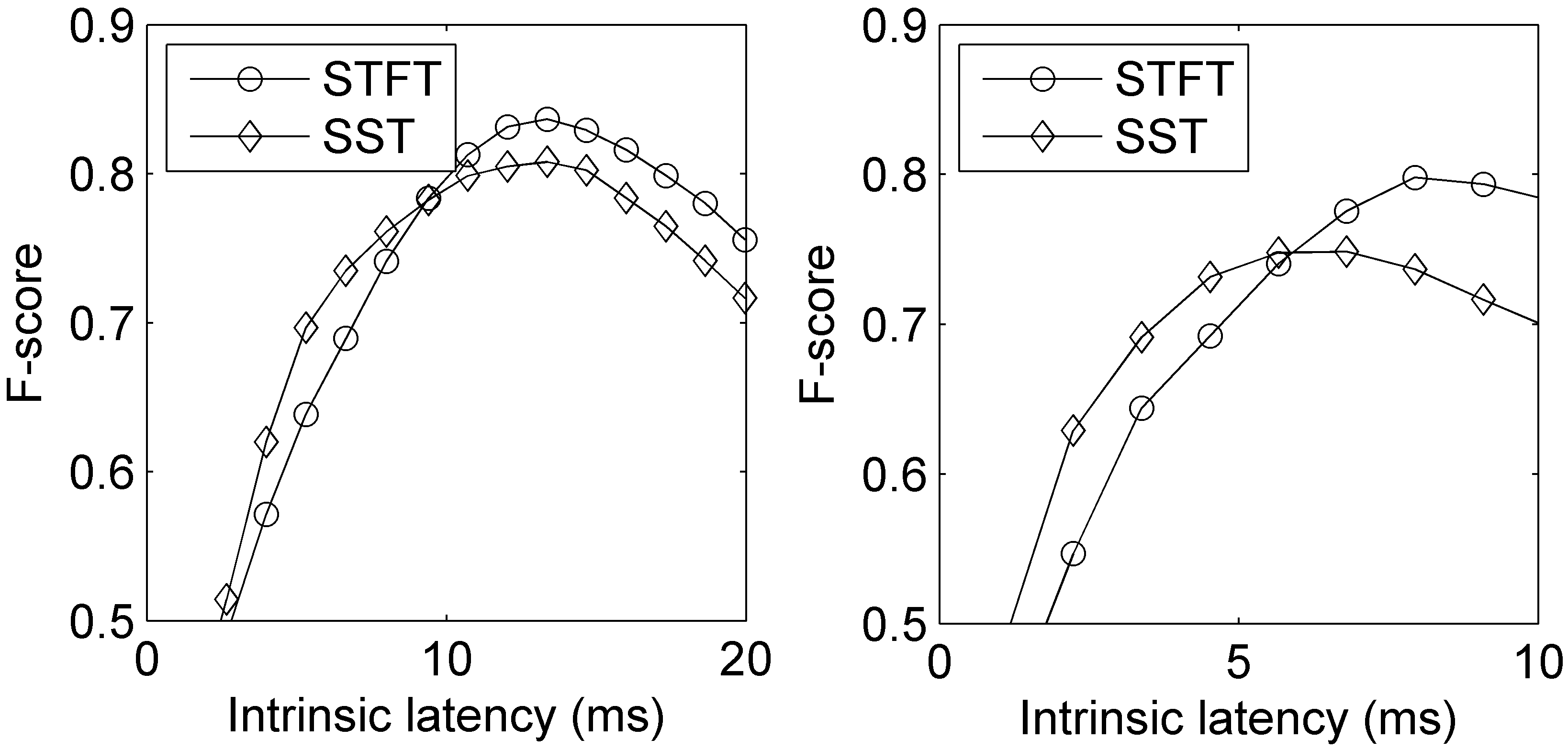}
\caption{Comparison of F-scores using different features (STFT and SST). Left: $V^{(h_{F,\MP})}_x$ and $S^{(h_{F,\MP})}_x$. Right: $V^{(h_{0.30,\MP})}_x$ and $S^{(h_{0.30,\MP})}_x$.}
\label{fig:sst_comparison}
\end{figure}

\subsection{Experiment result of features}
\label{sec:exp_result_of_feature}
The left and the right diagrams of Fig. \ref{fig:sst_comparison} compares the STFT and SST features, $V^{(h_{F,\MP})}_x$ and $S^{(h_{F,\MP})}_x$, and $V^{(h_{0.30,\MP})}_x$ and $S^{(h_{0.30,\MP})}_x$, respectively. In general, STFT outperforms SST when viewing the globally optimal F-score. Although SST enhances the oscillation terms, it is likely to introduce artifacts in the ODF due to its nature of nonlinearity, supposedly a drawback in onset detection task.

However, when the desired intrinsic latency of the system is extremely low, SST outperforms STFT for both windows. In particular, $S^{(h_{0.30,\MP})}_x$ outperforms $V^{(h_{0.30,\MP})}_x$ for $t^{h}_{l}<5.6$ ms, and $S^{(h_{F,\MP})}_x$ outperforms $V^{(h_{F,\MP})}_x$ for $t^{h}_{l}<9.4$ ms. To explain why, notice that when using a short window, STFT suffers from severe spectral leakage. When using SST, the spectral leakage could be accurately reassigned if the frequency bins are sufficiently dense. Therefore, when the window size is small, SST extracts more useful feature than STFT.

In real-world applications, the total latency is the sum of intrinsic latency and computational latency, where the latter depends on the algorithm and hardware design. From the experiment we see the potential of achieving a very low latency onset detection by the proposed scheme, which could be very useful for real-world applications.

\section{Conclusion}

To reduce the intrinsic latency in the TF analysis inherited from the window function, we explore the possibility of utilizing the asymmetric window. A rigorous definition of intrinsic latency for the TF analysis is provided and we theoretically show that TFR determined by SST does have a smaller intrinsic latency.
Further, a systematic method is proposed to construct an asymmetric window from a symmetric one based on the concept of minimum-phase, if the window satisfies some weak conditions.
To illustrate the proposed approach, we show its performance on the music onset detection problem.

\section{Acknowledgement}
Hau-tieng Wu's work is partially supported by Sloan Research Fellow FR-2015-65363.

\bibliographystyle{IEEEtran}
\bibliography{TFanalysis}

\appendix

\section{Stabilize SST by the rejection scheme}\label{sec:SSTvariation}

While SST has been widely applied in different fields (see \cite{Daubechies_Wang_Wu:2016} for a review), like RM, the TFR determined by SST tends to have a ``higher variance'' or ``more speckles'' \cite{Xiao_Flandrin:2007,Lin_Wu_Tsao_Yien_Hseu:2014,Daubechies_Wang_Wu:2016}. This situation is further worsened when the window is asymmetric, since the asymmetric window introduces nonlinear phase deformation.
To simultaneously achieve the low intrinsic latency, preserve the sharpness as well as stabilize the TFR, we consider the following {\em rejection scheme}.

Although theoretically SST could provide a sharp TFR, numerically $\Omega^{(h)}_x$ could be unstable, and hence lead to the instability. To stabilize the SST, we could take (\ref{Expansion:OmegaFunctionOverAdaptive}) into account. Indeed, (\ref{Expansion:OmegaFunctionOverAdaptive}) says that if the STFT contains the correct information about the IF, then the deviation of $\Omega_x^{(h)}(t,\omega)$ from $\omega$ should be bounded by $\Delta$; otherwise it is not guaranteed. Motivated by this fact, we could thus design the following {\em rejection rule}, in addition to the reassignment rule at time $t$
\begin{equation}
\mathfrak{R}_t^{(h)}:=\{\eta;\,|\Omega^{(h)}_x(t,\eta)-\eta|\notin \mathfrak{S}_h\},
\end{equation}
where $\mathfrak{S}_h$ is determined by the support region of $\hat{h}$. With the rejection rule, the SST defined in (\ref{definition:SST}), $S^{(h)}_{x}(t,\xi)$, could be modified to
\begin{align}\label{definition:ModifiedSST}
\int_{\mathfrak{N}_t\backslash \mathfrak{R}_t^{(h)}}V^{(h)}_x(t,\eta)g_\alpha\left(|\xi-\Omega^{(h)}_x(t,\eta)|\right)\ud \eta{\color{blue}\,;}
\end{align}
that is, we replace $\mathfrak{N}_t$ in (\ref{definition:SST}) by $\mathfrak{N}_t\backslash \mathfrak{R}_t^{(h)}$.
We could further consider different window functions to define more rejection rules. One particular approach is by taking the multi-taper idea \cite{Percival:1993} into account. Take $K$ orthogonal windows $h_k$, $k=1,\ldots,K$, which is called the {\em $K$-taper windows}, and we could define $K$ rejection rules, denoted as $\mathfrak{R}_t^{(h_k)}$. Then further replace $\mathfrak{N}_t\backslash \mathfrak{R}_t^{(h)}$ in (\ref{definition:ModifiedSST}) by $\mathfrak{N}_t\backslash \cup_{k=1}^K\mathfrak{R}_t^{(h_k)}$.
In addition to the rejection rules, by all means we can also compute the multi-taper TFR from these windows. However, to simplify the discussion, we do not discuss this direction in this paper; the interested reader is referred to \cite{Daubechies_Wang_Wu:2016} for further information.
In the following, we use the notation $S^{(h_1,h_2,\cdots,h_m)\backslash(g_1,g_2,\cdots,g_n)}_{x}$ to denote the multi-tapered SST computed by $m$ orthogonal windows $h_1,h_2,\cdots,h_m$ with the rejection rule determined by $n$ windows $g_1,g_2,\cdots,g_n$, which can be also chosen from $h_1,h_2,\cdots,h_m$. For example, the traditional SST using one window $h$ and setting the rejection rule determined by that window is denoted as $S^{(h)\backslash(h)}_x$.

We mention that while this approach is proposed to stabilize the TFR determined by SST with an asymmetric window, the original SST based on symmetric windows could also benefit from this rejection scheme. Note that the same idea could be applied to other window-based TF analyses, like the RM or the variations of SST.

\subsection{Example: Respiratory signal}

In the past decades, more and more clinical researches focus on extracting possible hidden dynamics, which are not easily observed by our naked eyes, from the respiratory signal. SST and several other TF analysis methods have proved useful for this purpose to help clinicians' decision making in several problems like the ventilator weaning \cite{Wu_Hseu_Bien_Kou_Daubechies:2013}, sleep apnea treatment, etc. Clearly, to push the application to the monitoring system, a low latency TF analysis tool is necessary. As an important signal, in this subsection, we show how SST works on it with the MP window and/or the rejection rules.

Fig. \ref{fig:respiration} compares several TFRs including $V^{(h)}_x$, $S^{(h)}_x$, $S^{(h)\backslash(h^\prime)}_x$, $S^{(h)\backslash(h,h^\prime)}_x$ and RM on a respiratory signal sampled at 100 Hz. We use the following 2-taper orthogonal windows $\left[h,h^\prime\right]$, where $h$ is the flat-top window and $h^\prime$ is its derivative:
\begin{align}
\label{eq:illustration_h} h(n) &= 0.28 c_0(n) - 0.52 c_1(n) + 0.20 c_2(n), \\
 \label{eq:illustration_g} h^\prime(n) &= - 0.52 s_1(n) + 0.20 s_2(n),
\end{align}
where $s_k(n) = \sin(2\pi kn/T)$. The MP transform of them, namely $h_{\MP}=\mathcal{M}h$ and $h^\prime_{\MP}=\mathcal{M}h^\prime$, respectively, are the corresponding asymmetric windows. The rejection rules are designed to only allow the components supported by the main-lobe regions of $h$ and $h^\prime$.
More specifically, we have
\begin{align}
\label{eq:rejection_h} \mathfrak{R}_t^{(h)}&:=\{\eta;\,|\Omega^{(h)}_x(t,\eta)-\eta| > \Delta\}, \\
\label{eq:rejection_g} \mathfrak{R}_t^{(h')}&:=\{\eta;\,||\Omega^{(h')}_x(t,\eta)-\eta|-\Delta| > \Delta/2\}
\end{align}
and the same for $h_{\MP}$ and $h^\prime_{\MP}$. We set $\Delta=1.5$ bins, a quarter of the main-lobe width of $h$. The window size is 20 seconds and the hop size is 0.2 seconds. The TFRs are illustrated in terms of the estimation time ($t^{(h)}_e$). We clearly observe that the TFRs using the symmetric and asymmetric window look quite similar for the main components, except for a ``time shift'' in the STFT and the three SSTs. The profiles of these TFRs using the MP window appear earlier than the one using the symmetric window by around 5 seconds; this time shift can be estimated by the intrinsic latency of the MP flat-top window (see Table \ref{tab: comp}). It is also clear that RM using the asymmetric window has no such time shift effect, which reflects the theoretical discussion above.

Through a careful examination, we could see more speckle terms and inter-component coupling in the TFR using MP window than the one using symmetric window. This is mainly caused by the nonlinear phase behavior in the MP window, which leads to a discrepancy in the frequency reassignment rules. However, comparing $S^{(h_{\MP})}_x$ to $S^{(h_{\MP})\backslash(h_{\MP})}_x$ and $S^{(h_{\MP})\backslash(h_{\MP},h^\prime_{\MP})}_x$, we found that such discrepancy can be well eliminated by using single- or multi-taper windows for rejection, and the IF curves are made clear enough as in the ones in the TFR using the symmetric window. In conclusion, multi-window rejection approach is an efficient way to obtain a sharply localized TFR, and we would expect more applications of this technique in other signals.

\subsection{Example: Piano signal}

Similarly, Fig. \ref{fig:piano_tfr} further compares the TFRs of a short excerpt of a jazz piano recording, using the same windows (\ref{eq:illustration_h})(\ref{eq:illustration_g}), rejection schemes (\ref{eq:rejection_h})(\ref{eq:rejection_g}), and the asymmetric windows constructed by the MP transform. The sampling rate is 5,512.5 Hz, the window size is 257 and the hop size is 16. Similarly, by using the MP windows each note event appears earlier by around 10 ms than using symmetric windows for all feature except RM. This can also be seen more clearly from the spectral flux features computed from the above-mentioned TFRs, as shown in the bottom of Fig. \ref{fig:piano_tfr}. For each type of TFR, the spectral flux using $h_{\MP}$ (red line) appears earlier than the one using $h$ by around 10 ms also. Using the rejection schemes again suppress the speckle terms, especially for those TFRs using MP windows.

However, we should mention that when analyzing the signals like piano, which has both pitched and percussive counterparts, multi-window rejection method should be carefully treated, depending on what kind of information we want to preserve. The short attack times of piano produces wide-spread spectra (i.e., vertical lines in the TFR) at the onset of every note, followed by a relatively stable harmonic pattern (i.e., horizontal lines in the TFR) revealing the behavior of natural vibration modes of the piano string. As illustrated, in $S^{(h_{\MP})\backslash(h_{\MP})}_x$ and $S^{(h_{\MP})\backslash(h_{\MP},h^\prime_{\MP})}_x$, although we successfully reduce the unwanted fluctuations, the vertical lines characterizing the information of note onsets are also eliminated because they are not narrow-band signals. Therefore, if we wish to preserve the wide-band counterparts, such as for the application like onset detection, too much rejection is not suggested. Detailed discussion will be reported in the future work.

\begin{figure*}
\centering
\includegraphics[width=\textwidth]{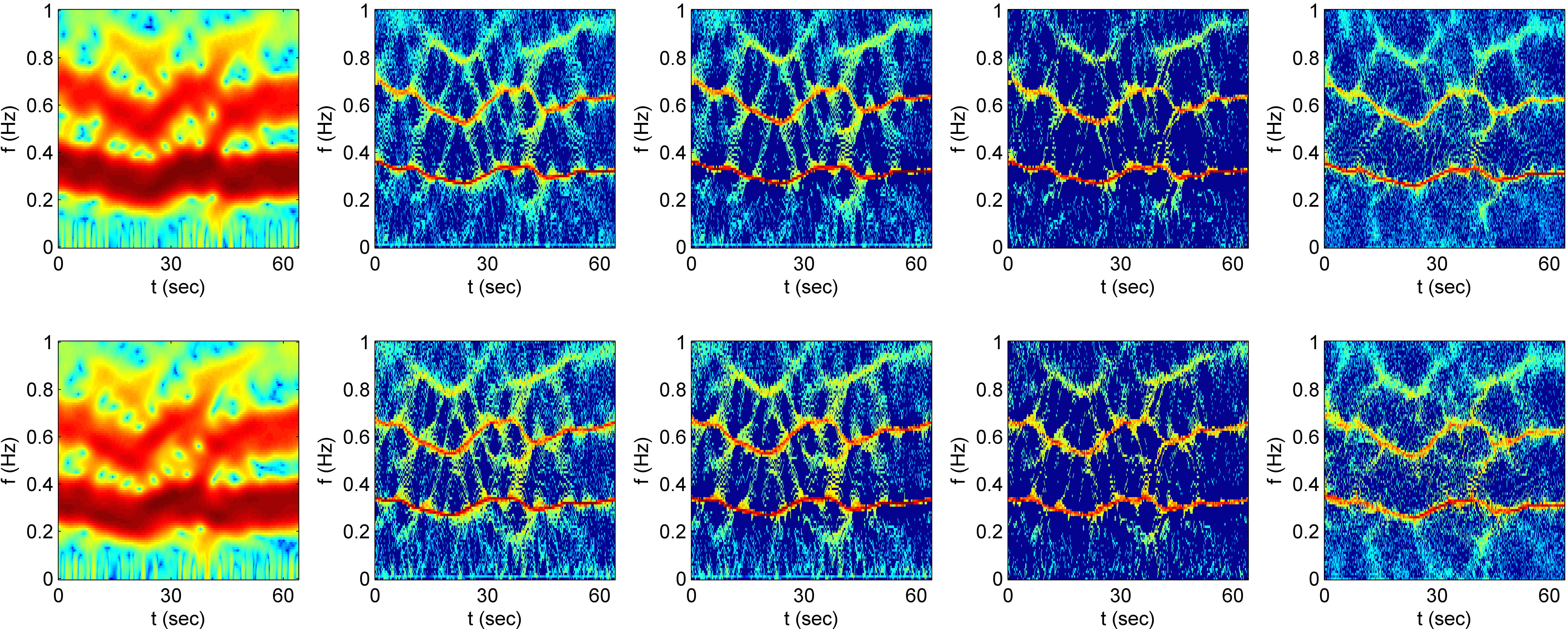}
\caption{Time-frequency representations of a respiratory signal. Upper row, from left to right: $V^{(h)}_x$, $S^{(h)}_x$, $S^{(h)\backslash(h)}_x$, $S^{(h)\backslash(h,h^\prime)}_x$ and $R^{(h)}_x$. Lower row, from left to right: $V^{(h_{\MP})}_x$, $S^{(h_{\MP})}_x$, $S^{(h_{\MP})\backslash(h_{\MP})}_x$, $S^{(h_{\MP})\backslash(h_{\MP},h^\prime_{\MP})}_x$ and $R^{(h_{\MP})}_x$. $h$ is the flat-top window and $h^\prime$ is the first derivative of the flat-top window.}
\label{fig:respiration}
\end{figure*}

\begin{figure*}
\centering
\includegraphics[width=\textwidth]{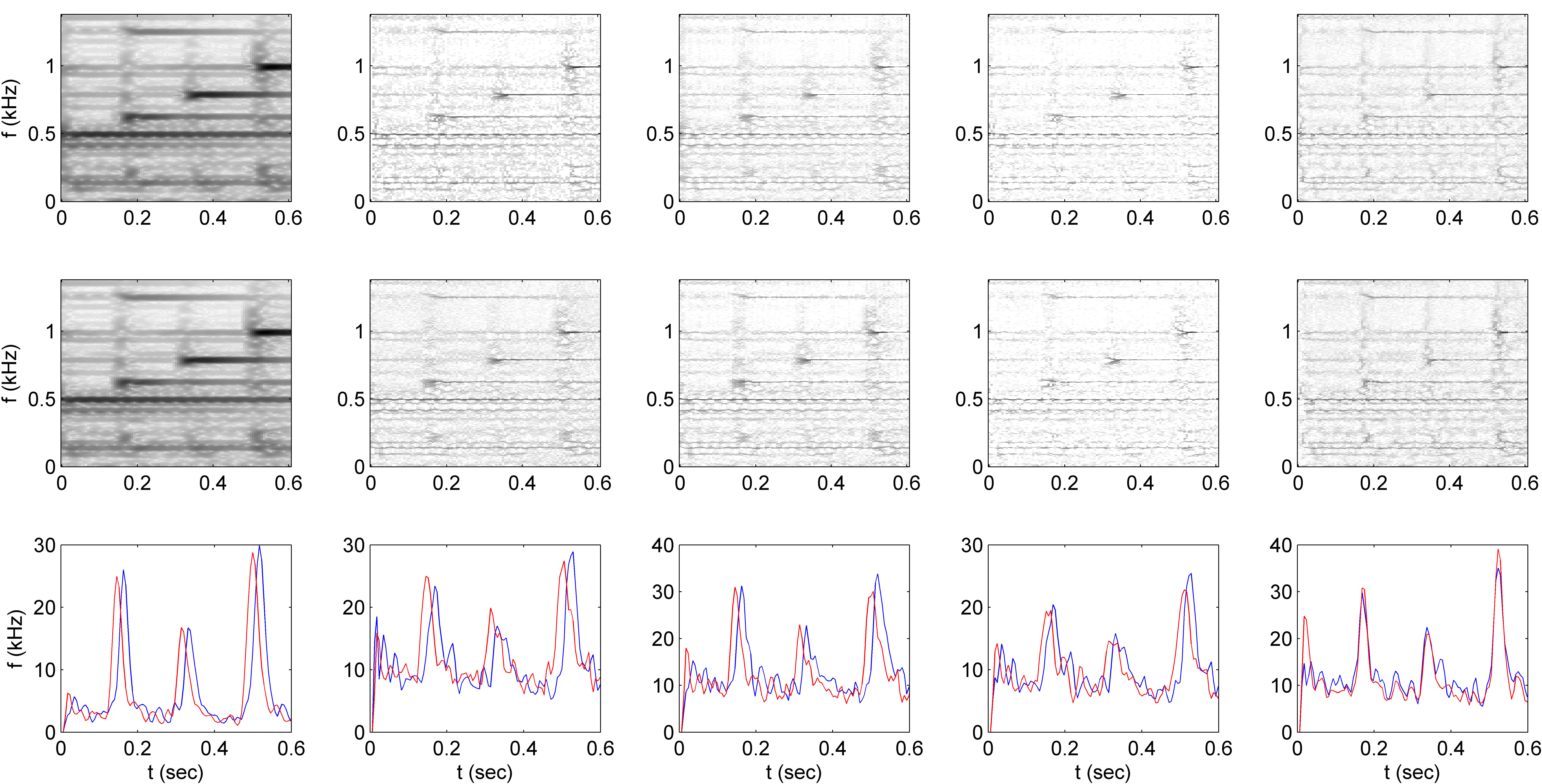}
\caption{Time-frequency representations of a piano signal. Upper row, from left to right: $V^{(h)}_x$, $S^{(h)}_x$, $S^{(h)\backslash(h)}_x$, $S^{(h)\backslash(h,h^\prime)}_x$ and $R^{(h)}_x$. Middle row, from left to right: $V^{(h_{\MP})}_x$, $S^{(h_{\MP})}_x$, $S^{(h_{\MP})\backslash(h_{\MP})}_x$, $S^{(h_{\MP})\backslash(h_{\MP},h^\prime_{\MP})}_x$ and $R^{(h_{\MP})}_x$. Lower row, from left to right: the spectral flux features computed by $V^{(h)}_x$, $S^{(h)}_x$, $S^{(h)\backslash(h)}_x$, $S^{(h)\backslash(h,h^\prime)}_x$ and $R^{(h)}_x$ (blue lines) and $V^{(h_{\MP})}_x$, $S^{(h_{\MP})}_x$, $S^{(h_{\MP})\backslash(h_{\MP})}_x$, $S^{(h_{\MP})\backslash(h_{\MP},h^\prime_{\MP})}_x$ and $R^{(h_{\MP})}_x$ (red lines). $h$ is the flat-top window and $h^\prime$ is the first derivative of the flat-top window.}
\label{fig:piano_tfr}
\end{figure*}

\end{document}